\renewcommand\footnotetextcopyrightpermission[1]{} 
\newcommand{\htor}[1]{}
\newcommand{\maciej}[1]{}
\newcommand{\goal}[1]{}
\newcommand{\flib}{\textsc{ftRMA}}
\newcommand{\fompi}{\textsc{foMPI}}
\newcommand{\ptoq}{$(p \rightarrow q)$}
\newcommand{\ptoqn}{$p \rightarrow q$}
\newcommand{\Dptoq}{$(p \mathrel{\substack{\textstyle\Rightarrow\\[-0.3ex]
                      \textstyle\rightarrow}} q)$}
\newcommand{\Dpfromq}{$(p \mathrel{\substack{\textstyle\Leftarrow\\[-0.3ex]
                      \textstyle\rightarrow}} q)$}
\newcommand{\Dqtop}{$(q \mathrel{\substack{\textstyle\Rightarrow\\[-0.3ex]
                      \textstyle\rightarrow}} p)$}
\newcommand{\Dqtor}{$(q \mathrel{\substack{\textstyle\Rightarrow\\[-0.3ex]
                      \textstyle\rightarrow}} r)$}
\newcommand{\Drtoq}{$(r \mathrel{\substack{\textstyle\Rightarrow\\[-0.3ex]
                      \textstyle\rightarrow}} q)$}
\newlength{\verticalcompensationlength}
\newcounter{verticalcompensationrows}
\newcommand{\verticalcompensation}[1]{%
\setcounter{verticalcompensationrows}{#1}%
\addtocounter{verticalcompensationrows}{-1}%
\vspace*{-\value{verticalcompensationrows}\verticalcompensationlength}%
}
\newcommand{\multirowbt}[3]{%
\multirow{#1}{#2}{\verticalcompensation{#1}#3}%
}
\newcommand{\smalltt}[1]{{\small\texttt{#1}}}
\bfseries\color{green!40!black},
\itshape\color{purple!40!black},
\crefname{section}{§}{§§}
\Crefname{section}{§}{§§}
\newtheorem{defi}{Definition}
\newtheorem{theorem}{Theorem}[section]
\begin{document}

\pgfplotscreateplotcyclelist{listForPlots}{%
{mark=star},
{blue,mark=square*},
{black,mark=otimes*},
{brown,mark=diamond*},
{red,mark=triangle*},
{black,mark=pentagon},
{orange,mark=oplus}}

\pgfplotscreateplotcyclelist{listForPlotsNoMarkers}{%
{red,mark=none,style=solid, thick},
{blue,mark=none,style=dashed, thick},
{black,mark=none,style=densely dotted, thick},
{brown,mark=none,style=dashdotted, thick},
{black,mark=none,style=loosely dotted, thick}}

\title[Fault Tolerance for Remote Memory Access Programming Models]{Fault Tolerance for Remote Memory Access\\ Programming Models}

\author{Maciej Besta}
       \affiliation{Department of Computer Science\\
       ETH Zurich}
       \email{maciej.besta@inf.ethz.ch}
\author{Torsten Hoefler}
       \affiliation{Department of Computer Science\\
       ETH Zurich}
       \email{htor@inf.ethz.ch}

\begin{abstract}
%
Remote Memory Access (RMA) is an emerging mechanism for programming
high-performance computers and datacenters. However, little work exists on
resilience schemes for RMA-based applications and systems.
%
In this paper we analyze fault tolerance for RMA and show that it is
fundamentally different from resilience mechanisms targeting the message passing (MP)
model. We design a model for reasoning about fault tolerance for RMA,
addressing both flat and hierarchical hardware.  We use this model to construct
several highly-scalable mechanisms that provide efficient low-overhead
in-memory checkpointing, transparent logging of remote memory accesses, and a
scheme for transparent recovery of failed processes.
%
Our protocols take into account diminishing amounts of memory per core, one of major
features of future exascale machines. The implementation of our fault-tolerance
scheme entails negligible additional overheads. Our reliability model shows
that in-memory checkpointing and logging provide high resilience.
This study enables highly-scalable resilience mechanisms for RMA and fills a
research gap between fault tolerance and emerging RMA programming
models. 
\end{abstract}

\begin{CCSXML}
<ccs2012>
   <concept>
       <concept_id>10011007.10010940.10011003.10011005</concept_id>
       <concept_desc>Software and its engineering~Software fault tolerance</concept_desc>
       <concept_significance>500</concept_significance>
       </concept>
   <concept>
       <concept_id>10011007.10010940.10011003.10011005</concept_id>
       <concept_desc>Software and its engineering~Software fault tolerance</concept_desc>
       <concept_significance>500</concept_significance>
       </concept>
   <concept>
       <concept_id>10010583.10010750.10010751</concept_id>
       <concept_desc>Hardware~Fault tolerance</concept_desc>
       <concept_significance>500</concept_significance>
       </concept>
   <concept>
       <concept_id>10010583.10010750.10010751.10010752</concept_id>
       <concept_desc>Hardware~Error detection and error correction</concept_desc>
       <concept_significance>300</concept_significance>
       </concept>
   <concept>
       <concept_id>10010583.10010750.10010751.10010754</concept_id>
       <concept_desc>Hardware~Failure recovery, maintenance and self-repair</concept_desc>
       <concept_significance>500</concept_significance>
       </concept>
   <concept>
       <concept_id>10010583.10010750.10010751.10010755</concept_id>
       <concept_desc>Hardware~Redundancy</concept_desc>
       <concept_significance>300</concept_significance>
       </concept>
   <concept>
   <concept_id>10003033.10003034.10003038</concept_id>
   <concept_desc>Networks~Programming interfaces</concept_desc>
   <concept_significance>100</concept_significance>
   </concept>
   <concept>
   <concept_id>10010147.10010919.10010172</concept_id>
   <concept_desc>Computing methodologies~Distributed algorithms</concept_desc>
   <concept_significance>300</concept_significance>
   </concept>
   <concept>
   <concept_id>10010147.10010919.10010177</concept_id>
   <concept_desc>Computing methodologies~Distributed programming languages</concept_desc>
   <concept_significance>300</concept_significance>
   </concept>
   <concept>
   <concept_id>10003752.10003809.10010172</concept_id>
   <concept_desc>Theory of computation~Distributed algorithms</concept_desc>
   <concept_significance>100</concept_significance>
   </concept>
   <concept>
   <concept_id>10011007.10011074.10011075</concept_id>
   <concept_desc>Software and its engineering~Designing software</concept_desc>
   <concept_significance>100</concept_significance>
   </concept>
 </ccs2012>
\end{CCSXML}

\ccsdesc[500]{Software and its engineering~Software fault tolerance}
\ccsdesc[500]{Software and its engineering~Checkpoint / restart}
\ccsdesc[500]{Hardware~Fault tolerance}
\ccsdesc[300]{Hardware~Error detection and error correction}
\ccsdesc[500]{Hardware~Failure recovery, maintenance and self-repair}
\ccsdesc[300]{Hardware~Redundancy}
\ccsdesc[100]{Networks~Programming interfaces}
\ccsdesc[300]{Computing methodologies~Distributed algorithms}
\ccsdesc[300]{Computing methodologies~Distributed programming languages}
\ccsdesc[100]{Theory of computation~Distributed algorithms}
\ccsdesc[100]{Software and its engineering~Designing software}

\maketitle
\pagestyle{plain}

{\vspace{-0.5em}\noindent \textbf{This is a full version of a paper published at\\ ACM HPDC'14 under the same title}}

{\vspace{1em}\small\noindent\textbf{Project website:}\\\url{https://spcl.inf.ethz.ch/Research/Parallel\_Programming/ftRMA}}

\section{Introduction}

\goal{Introduce RMA \& RDMA; convince readers they're getting common}

Partitioned Global Address Space (PGAS), and the wider class of Remote
Memory Access (RMA) programming models enable high-performance
communications that often outperform Message
Passing~\cite{fompi-paper,Petrovic:2012:HRB:2312005.2312029}. 
%
RMA utilizes remote direct memory access (RDMA) hardware features to
access memories at remote processes without involving the OS or the
remote CPU.

RDMA is offered by most modern HPC networks (InfiniBand, Myrinet, Cray's Gemini
and Aries, IBM's Blue Gene, and PERCS) and many Ethernet interconnects
that use the RoCE or iWARP protocols. RMA languages and
libraries include Unified Parallel C (UPC), Fortran 2008 (formerly
known as CAF), MPI-3 One Sided, Cray's SHMEM
interface, or Open Fabrics (OFED).  Thus,
we observe that RMA
is quickly emerging to be the programming model of choice for cluster systems, HPC computers, and
large datacenters.

\goal{Introduce basic concepts (CR, ML) in FT and motivate our work}


\sloppy
Fault tolerance of such systems is important because hardware
and software faults are ubiquitous~\cite{Sato:2012:DMN:2388996.2389022}. 
Two popular resilience schemes used in today's computing environments
are coordinated checkpointing (CC) and uncoordinated checkpointing augmented
with message logging (UC)~\cite{Elnozahy:2002:SRP:568522.568525}.
In CC applications regularly synchronize to save
their state to memory, local disks, or parallel file system (PFS)~\cite{Sato:2012:DMN:2388996.2389022};
this data is used to restart after a crash. In UC processes take checkpoints
independently and use message logging to avoid rollbacks caused by the \emph{domino effect}~\cite{Riesen:2012:ASI:2388996.2389021}. There has been considerable
research on CC and UC for the message passing (MP)
model~\cite{Elnozahy:2002:SRP:568522.568525,Alvisi:1998:MLP:630821.631222}.
Still, no work addresses the exact design of these schemes for
RMA-based systems.

\begin{table*}
\centering
\scriptsize \begin{tabular}{lllll}
\toprule

\parbox{0.1cm}{} & \textbf{MPI-3 one sided operation} & \textbf{UPC operation}
& \textbf{Fortran 2008 operation} & \textbf{Cat.} \\ \midrule
\multirowbt{2}{*}{\parbox{0.1cm}{\begin{turn}{90}comm.\end{turn}}} &
\parbox{6.2cm}{\textsf{MPI\_Put}, \textsf{MPI\_Accumulate},
\textsf{MPI\_Get\_accumulate},\\ \textsf{MPI\_Fetch\_and\_op},
\textsf{MPI\_Compare\_and\_swap}} & \parbox{4.9cm}{\textsf{upc\_memput},
\textsf{upc\_memcpy}, \textsf{upc\_memset},\\assignment (\textsf{=}), all UPC
collectives} & \parbox{3.8cm}{assignment (\textsf{=})} &
\parbox{0.7cm}{\textsc{put}} \\ \cmidrule{2-5}

\parbox{0.1cm}{} & \parbox{6.2cm}{\textsf{MPI\_Get}, \textsf{MPI\_Compare\_and\_swap},\\
\textsf{MPI\_Get\_accumulate}, \textsf{MPI\_Fetch\_and\_op}} & \parbox{4.9cm}{\textsf{upc\_memget},
\textsf{upc\_memcpy}, \textsf{upc\_memset},\\assignment (\textsf{=}), all UPC
collectives} & \parbox{3.8cm}{assignment (\textsf{=})} &
\parbox{0.7cm}{\textsc{get}} \\ \midrule

\multirowbt{4}{*}{\parbox{0.1cm}{\begin{turn}{90}sync.\end{turn}}} &
\parbox{6.2cm}{\textsf{MPI\_Win\_lock}, \textsf{MPI\_Win\_lock\_all}} & \parbox{4.9cm}{\textsf{upc\_lock}} & \parbox{3.8cm}{\textsf{lock}} & \parbox{0.7cm}{\textsc{lock}}\\
\cmidrule{2-5}

\parbox{0.1cm}{} & \parbox{6.2cm}{\textsf{MPI\_Win\_unlock},
\textsf{MPI\_Win\_unlock\_all}} & \parbox{4.9cm}{\textsf{upc\_unlock}} &
\parbox{3.8cm}{\textsf{unlock}} & \parbox{0.7cm}{\textsc{unlock}}\\
\cmidrule{2-5}

\parbox{0.1cm}{} & \parbox{6.2cm}{\textsf{MPI\_Win\_fence}} &
\parbox{4.9cm}{\textsf{upc\_barrier}} & \parbox{3.8cm}{\textsf{sync\_all},
\textsf{sync\_team}, \textsf{sync\_images}} &
\parbox{0.7cm}{\textsc{gsync}}\\ \cmidrule{2-5}

\parbox{0.1cm}{} & \parbox{6.2cm}{\textsf{MPI\_Win\_flush},
\textsf{MPI\_Win\_flush\_all}, \textsf{MPI\_Win\_sync}} &
\parbox{4.9cm}{\textsf{upc\_fence}} & \parbox{3.8cm}{\textsf{sync\_memory}} &
\parbox{0.7cm}{\textsc{flush}}\\ \bottomrule
\end{tabular}
\caption{Categorization of MPI One Sided/UPC/Fortran 2008 operations in our
model. Some atomic functions are considered as both \textsc{put}s and
\textsc{get}s. 
In UPC, the collectives, assignments and
\textsf{upc\_memset}/\textsf{upc\_memcpy} behave similarly depending on
the values of pointers to shared objects; the same applies to Fortran
2008. We omit MPI's post-start-complete-wait synchronization 
and request-based RMA operations for simplicity.}
\label{tab:mpiCategories}
\end{table*}

\goal{state that we explore FT for RMA (novel)}


In this work we develop a generic model for reasoning
about resilience in RMA. Then, using this model, we show that CC and UC for RMA fundamentally differ
from analogous schemes for MP. We also construct protocols that
enable simple checkpointing and logging of remote memory accesses. We \emph{only} use
\emph{in-memory} mechanisms to avoid costly I/O flushes and frequent
disk and PFS failures~\cite{Sato:2012:DMN:2388996.2389022, disk_fails}.
We then
extend our model to cover two features of today's petascale and future
exascale machines: (1) the growing complexity of hardware components and (2)
decreasing amounts of memory per core.  \emph{With this, our study fills an
important knowledge gap between fault-tolerance and emerging
RMA programming in large-scale computing systems.}


\goal{itemize and describe our concrete constributions}

In detail, we provide the following major contributions:
%
\begin{itemize}[leftmargin=1em] 
\item We design a model for reasoning about the reliability of RMA systems
running on flat and hierarchical hardware with limited memory per
core. To our knowledge, this is the first work that addresses these
issues.
\item We construct schemes for in-memory checkpointing, logging, and recovering RMA-based
applications.
\item We unify these concepts in a topology-aware diskless protocol and
we use real data and an analytic model to show that the protocol
can endure concurrent hardware failures.
\item We present the implementation of our protocol, analyze its performance,
show it entails negligible overheads, and compare it to other schemes.
\end{itemize}


\section{RMA Programming} \label{sec:formalModel}

\goal{Introduce our model and describe memory sharing}

We now discuss concepts of RMA programming and present a formalization
that covers existing RMA/PGAS models with strict or relaxed
memory consistency (e.g., UPC or MPI-3 One Sided). 
In RMA, each process explicitly
exposes an area of its local memory as shared. Memory can be shared in different
ways (e.g., MPI windows, UPC shared arrays, or Co-Arrays in Fortran 2008); details are
outside the scope of this work. Once shared,
memory can be accessed with various language-specific operations.





\subsection{RMA Operations} 
\label{sec:rma_ops}

\goal{+ Distinguish between communication/synchronization calls and active/passive procs}

\sloppy
We identify two fundamental types of RMA operations:
\emph{communication} actions (often called \emph{accesses}; they transfer data between processes), and
\emph{synchronization} actions (synchronize processes and guarantee memory
consistency). A process $p$ that issues an RMA action targeted at
$q$ is called the \emph{active} \emph{source}, and
$q$ is called the \emph{passive} \emph{target}. 
We assume $p$ is active and
$q$ is passive (unless stated otherwise).


\subsubsection{Communication Actions}

\goal{++ Introduce puts/gets and a formalism for communication functions}


We denote an action that transfers data from $p$ to $q$ and from $q$ to $p$ as \textsc{put}\Dptoq\
and \textsc{get}\Dpfromq\@, respectively.
We use double-arrows to emphasize the asymmetry of the two
operations: the upper arrow indicates the direction of data flow and the lower arrow
indicates the direction of control flow. 
%
%
The upper part of Table~\ref{tab:mpiCategories}
categorizes communication operations in various RMA languages. Some
actions (e.g., atomic compare and swap) transfer data in \emph{both} directions and thus 
fall into the family of \textsc{put}s \emph{and} \textsc{get}s.

\goal{++ Model atomics that combine remote and local data}

We also distinguish between \textsc{put}s that ``blindly'' replace a targeted memory region at $q$ with a new value
(e.g., UPC assignment), and \textsc{put}s that combine the data moved to
$q$ with the data that already resides at $q$ (e.g.,
\textsf{MPI\_Accumulate}). When necessary, we refer to the former type as the \emph{replacing} \textsc{put}, and to the latter as the \emph{combining} \textsc{put}.


\subsubsection{Memory Synchronization Actions}

\goal{++ Describe and formalize synchronization actions}

We identify four major categories of memory synchronization actions:
\textsc{lock}(\ptoqn$,str)$ (locks a structure $str$ in $q$'s memory to
provide exclusive access),
\textsc{unlock}(\ptoqn$,str)$ (unlocks $str$ in $q$'s memory and
enforces consistency of $str$),
\textsc{flush}(\ptoqn$,str)$ (enforces consistency of $str$ in $p$'s and
$q$'s memories)\@, and \textsc{gsync}$(p \to \diamond, str)$ (enforces
consistency of $str$); $\diamond$ indicates that a call targets all processes.
Arrows indicate the flow of control (synchronization). When we refer to
the whole process memory (and not a single structure), we omit $str$
(e.g., \textsc{lock}(\ptoqn)). The lower part of
Table~\ref{tab:mpiCategories} categorizes synchronization calls in
various RMA languages.

\subsection{Epochs and Consistency Order}
\label{sec:epochs_consistency}

\goal{+ Explain and formalize epochs}


RMA's relaxed memory consistency enables non-blocking
\textsc{put}s and \textsc{get}s. Issued operations are completed by 
memory consistency actions (\textsc{flush}, \textsc{unlock},
\textsc{gsync}). The period between any two such actions issued by $p$ and targeting the same
process $q$ is called an
\emph{epoch}. Every \textsc{unlock}\ptoq\ or \textsc{flush}\ptoq\ 
\emph{closes} $p$'s current epoch and
\emph{opens} a new one (i.e., increments $p$'s epoch number denoted as
$E$\ptoq\@).
%
$p$ can be in several
independent epochs related to each process that it communicates with.
%
As \textsc{gsync} is a collective call, it increases
epochs at every process.

\goal{+ Introduce and explain consistency order}

An important concept related to epochs is the \emph{consistency order} (denoted as $\xrightarrow{co}$). $\xrightarrow{co}$ orders
the visibility of actions: $x \xrightarrow{co} y$ means that memory effects of action $x$ are globally visible before action $y$.
Actions issued in different epochs by process $p$ targeting the same process $q$ are always ordered with $\xrightarrow{co}$. Epochs and $\xrightarrow{co}$ are illustrated in Figure~\ref{fig:epochs}. $x\ ||_{co}\ y$ means that actions $x$ and $y$ are \emph{not} ordered with $\xrightarrow{co}$.


\begin{figure}[h!] \centering
\includegraphics[width=0.44\textwidth]{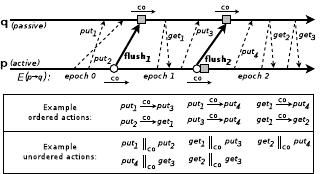}
\caption{Epochs and the consistency order
  $\xrightarrow{co}$ (\cref{sec:epochs_consistency}). White circles
  symbolize synchronization calls (in this case \textsc{flush}).
Grey squares show when calls' results become
globally visible in $q$'s or $p$'s memory.
} \label{fig:epochs}
\end{figure}

%
%
%
%
%


\subsection{Program, Synchronization, and Happened Before Orders}
\label{sec:orders}

\goal{+ Introduce and explain PO, SO, HB orders}

In addition to $\xrightarrow{co}$ we require three more orders to
specify an RMA execution~\cite{hoefler2013remote}:
The \emph{program order} ($\xrightarrow{po}$) specifies the order of
actions of a single thread, similarly to the program order in
Java~\cite{Manson:2005:JMM:1040305.1040336} ($x \xrightarrow{po} y$
means that $x$ is called before $y$ by some thread).
%
%
The \emph{synchronization order} ($\xrightarrow{so}$) orders
\textsc{lock} and \textsc{unlock} and other synchronizing operations.
\emph{Happened-before} (HB, $\xrightarrow{hb}$), a relation well-known in message passing~\cite{Lamport:1978:TCO:359545.359563}, is the transitive closure of the union of $\xrightarrow{po}$ and $\xrightarrow{so}$.
We abbreviate a \emph{consistent happen-before} as $\xrightarrow{cohb}$: $a \xrightarrow{cohb} b \equiv a \xrightarrow{co} b \wedge a \xrightarrow{hb} b$.
To state that actions are \emph{parallel} in an order, we use the symbols $||_{po},\ ||_{so},\ ||_{hb}$. 
We show the orders in
Figure~\ref{fig:orders}; more details can be found
in~\cite{hoefler2013remote}.

\begin{figure}[h!] \centering
\includegraphics[width=0.44\textwidth]{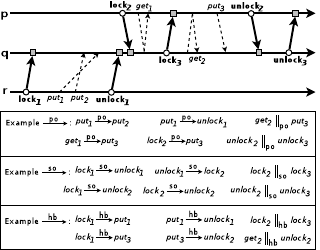}
\caption{Example RMA orderings $\xrightarrow{po}, \xrightarrow{so}$, $\xrightarrow{hb}$ (\cref{sec:orders}).} \label{fig:orders}
\end{figure}


\subsection{Formal Model}
\label{sec:formal_model}

\goal{+ Describe the model assumptions}

We now combine the various RMA concepts and fault tolerance into a
single formal model.
We assume fail-stop faults (processes can disappear
nondeterministically but behave correctly while being a part of the
program). The data
communication may happen out of order as specified for most RMA models.
Communication channels between
non-failed processes are asynchronous, reliable, and error-free. 
The user code can only communicate and synchronize using RMA functions specified in Section~\ref{sec:rma_ops}.
Finally, checkpoints and logs are stored
in \emph{volatile} memories.

%
%

\goal{+ Formalize communication actions}

We define a communication action $a$ as a tuple

\small
\begin{alignat}{1}
a = \langle type, src, trg, combine, EC, GC, SC, GNC, data \rangle
\end{alignat}
\normalsize

\noindent
where $type$ is either a put or a get, $src$ and $trg$ specify the
source and the target, and $data$ is the data carried by $a$. $Combine$
determines if $a$ is a replacing \textsc{put} ($combine = false$) or a
combining \textsc{put} ($combine = true$). $EC$ (\emph{Epoch Counter})
is the epoch number in which $a$ was issued. $GC$, $SC$, and $GNC$ are
counters required for correct recovery; we discuss them in more detail
in Section~\ref{sec:complicated}. We combine the notation from
Section~\ref{sec:rma_ops} with this definition and write 
\textsc{put}\Dptoq$.EC$ to refer to the epoch in which the put happens.
We also define a {\emph{determinant}} of $a$ (denoted as $\#a$, cf.~\cite{Alvisi:1998:MLP:630821.631222}) to be tuple $a$ without $data$:

\small
\begin{alignat}{1}
\#a = \langle type, src, trg, combine, EC, GC, SC, GNC \rangle.
\end{alignat}
\normalsize


\goal{+ Formalize synchronization actions}

\noindent
Similarly, a synchronization action $b$ is defined as

\small
\begin{alignat}{1}
b = \langle type, src, trg, EC, GC, SC, GNC, str \rangle.
\end{alignat}
\normalsize


\goal{+ Formalize a distributed system}

\noindent
Finally, a trace of an RMA program running on a distributed system can
be written as the
tuple



%


\small
\begin{alignat}{2}
\mathcal{D} = \langle \mathcal{P}, \mathcal{E},
\mathcal{S}, \xrightarrow{po}, \xrightarrow{so}, \xrightarrow{hb},
\xrightarrow{co} \rangle,
\end{alignat}
\normalsize

\noindent
where 
\begin{description}[leftmargin=1.4em]
  \itemsep-1pt
\item $\mathcal{P}$ is the set of all $\mathcal{P}$rocesses in
  $\mathcal{D}$ ($|\mathcal{P}| = N$),
\item $\mathcal{E} = \mathcal{A} \cup \mathcal{I}$ is
  the set of all $\mathcal{E}$vents: 
\item $\mathcal{A} $ is the set of RMA $\mathcal{A}$ctions,
\item $\mathcal{I}$ is the set of $\mathcal{I}$nternal
  actions (reads, writes, checkpoint actions). $\textsc{read}(x,p)$ loads local variable $x$ and
$\textsc{write}(x := val,p)$ assigns $val$ to $x$ (in $p$'s memory).
$C_{p}^{i}$ is the $i$th checkpoint action taken by $p$. Internal
  events are partially ordered with actions using $\xrightarrow{po}$,
  $\xrightarrow{co}$, and $\xrightarrow{hb}$.
\item $\mathcal{S}$ is the set of all data $\mathcal{S}$tructures used by
  the program

\end{description}

\section{Fault-tolerance for RMA}
\label{sec:basicFTforRMA}

\goal{Introduce and summarize the section}


We now present schemes that make RMA codes fault tolerant. We start with
the simpler CC and then present RMA protocols for UC.


%


\subsection{Coordinated Checkpointing (CC)}
\label{sec:taking_coordinated_ckp}


In many CC schemes, the user explicitly calls a function to take a
checkpoint. Such protocols may leverage RMA's features (e.g., direct
memory access) to improve the performance. However, these schemes have
several drawbacks: they complicate the code because they can only be
called when the network is quiet~\cite{632814} and they do not
always fit the optimality criteria such as Daly's checkpointing
interval~\cite{Daly:2006:HOE:1134241.1134248}.  In this section, we
first identify how CC in RMA differs from CC in MP and then describe a
scheme for RMA codes that performs CC \emph{transparently} to the
application.
We model a coordinated checkpoint as a set $C = \{C_{p_1}^{i_1},
C_{p_2}^{i_2}, ..., C_{p_N}^{i_N}\} \subseteq \mathcal{I}, p_m \neq p_n$
for any $m,n$.

\subsubsection{RMA vs. MP: Coordinated Checkpointing}
\label{sec:rma_vs_mp_cc}

\goal{+ Explain why CC differ in MP and RMA}

In MP, every $C$ has to satisfy a \emph{consistency
condition}~\cite{632814}: $\forall C_{p}^{i}, C_{q}^{j} \in C:\
C_{p}^{i}\ ||_{hb}\ C_{q}^{j}$. This condition ensures that $C$ does not
reflect a system state in which one process received a message that was
\emph{not} sent by any other process. We adopt this condition and
extend it to cover all RMA semantics:

\begin{defi}
$C$ is RMA-consistent iff $\forall C_{p}^{i}, C_{q}^{j} \in C:\
C_{p}^{i}\ ||_{cohb}\ C_{q}^{j}$.
\end{defi}

We extend $||_{hb}$ to $||_{cohb}$ to
guarantee that the system state saved in $C$ does not contain a process
affected by a memory access that was \emph{not} issued by any other
process. In RMA, unlike in MP, this condition can be easily satisfied
because each process can drain the network with a local \textsc{flush}
(enforcing consistency at any point
is legal~\cite{hoefler2013remote})

\subsubsection{Taking a Coordinated Checkpoint}
\label{sec:cc_for_rma}


\goal{+ Describe our CC schemes}

We now propose two diskless schemes that obey the
RMA-consistency condition and target MPI-3 RMA codes.  The first
(``Gsync'') scheme can be used in programs that \emph{only} synchronize with
\textsc{gsync}s. The other (``Locks'') scheme targets codes that
\emph{only} synchronize with \textsc{lock}s and \textsc{unlock}s. Note that in
correct MPI-3 RMA programs \textsc{gsync}s and
\textsc{lock}s/\textsc{unlock}s cannot be mixed~\cite{mpi3}. All our schemes
assume that a \textsc{gsync} may also introduce an additional
$\xrightarrow{hb}$ order, which is true in some
implementations~\cite{mpi3}.

\goal{+ Describe the ``gsyncs'' coordinated scheme}

\textbf{The ``Gsync'' Scheme }
Every process may take a coordinated checkpoint right after the user
calls a \textsc{gsync} and before any further RMA calls by: (1)
optionally enforcing the global $\xrightarrow{hb}$ order with an
operation such as \textsf{MPI\_Barrier} (denoted as \textsc{bar}), and
taking the checkpoint. 
Depending on the application needs, not every \textsc{gsync} has to be
followed by a checkpoint. We use Daly's
formula~\cite{Daly:2006:HOE:1134241.1134248} to compute the best
interval between such checkpoints and we take checkpoints after the right
\textsc{gsync} calls.

\goal{+ Prove that ``gsyncs'' satisfies the consistency condition}


\begin{theorem}
The Gsync scheme satisfies the RMA-consistency condition and does not deadlock.
\end{theorem}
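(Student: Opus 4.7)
The plan is to decompose the claim into two independent goals --- (a) the RMA-consistency of every coordinated set $C$ produced by the Gsync scheme, and (b) deadlock-freedom of the scheme itself --- and attack them separately. For (a) I would use the observation that $||_{cohb}$ is implied by $||_{hb}$, so it suffices to establish the stronger statement that the checkpoints in $C$ are pairwise $\xrightarrow{hb}$-unordered. Same-process pairs do not arise because, by the model in Section~\ref{sec:taking_coordinated_ckp}, $C$ contains at most one $C_{p}^{i}$ per process $p$.

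First I would pick an arbitrary pair $C_p^i, C_q^j \in C$ with $p \neq q$ and unfold $\xrightarrow{hb}$ as the transitive closure of $\xrightarrow{po}\cup\xrightarrow{so}$. Since $\xrightarrow{po}$ stays within a single thread, any putative chain from $C_p^i$ to $C_q^j$ must include at least one $\xrightarrow{so}$ edge crossing from $p$ to $q$. By construction, both checkpoints are placed in the ``quiet window'' immediately after the common \textsc{gsync} (optionally strengthened by \textsc{bar}) and before either process issues any further RMA or synchronization call, so neither $C_p^i$ nor $C_q^j$ is itself incident on a crossing $\xrightarrow{so}$ edge. Any such crossing would therefore have to go from a synchronization action strictly $\xrightarrow{po}$-after $C_p^i$ on $p$ to one strictly $\xrightarrow{po}$-before $C_q^j$ on $q$ --- i.e.\ from a post-synchronization event on $p$ back to a pre-synchronization event on $q$ --- which the globality of the \textsc{gsync} (plus the added \textsc{bar}) forbids. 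Hence $C_p^i\ ||_{hb}\ C_q^j$, and $C_p^i\ ||_{cohb}\ C_q^j$ follows immediately.

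For (b) I would argue that the only ingredients the scheme adds on top of the user program are the collective \textsc{bar} (when invoked) and the purely local checkpoint write $C_p^i$. All correctly-participating processes call these in the same symmetric order, and the checkpoint write blocks on no remote endpoint, so the scheme cannot close a new cyclic dependency. Given that \textsc{gsync} and \textsc{bar} terminate under the standing assumption of reliable non-failed channels, the augmented execution terminates whenever the original one does.

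The hard part will be making the ``no reverse $\xrightarrow{so}$ crossing'' step fully rigorous, because \textsc{bar} is explicitly described as optional and the paper concedes that not every implementation of \textsc{gsync} already induces $\xrightarrow{hb}$. I would handle this with a clean case split: when the implementation promotes \textsc{gsync} to a true $\xrightarrow{hb}$-barrier, the argument above goes through verbatim; otherwise I appeal to the inserted \textsc{bar} to supply exactly the missing cross-process $\xrightarrow{so}$ edge that blocks any backward chain. A minor but necessary sanity check is that no $\xrightarrow{co}$ edge sneaks in between the two checkpoints through internal events; this is immediate since no RMA action sits between the synchronization and the checkpoint on either side, so no $\xrightarrow{co}$-witness exists and the weaker $||_{cohb}$ conclusion is safe even if one day $\xrightarrow{hb}$ alone proved insufficient.
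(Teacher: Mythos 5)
Your proposal is correct, and it matches the paper's proof in its overall decomposition (consistency plus deadlock-freedom) and in the deadlock half: the paper likewise argues that in a correct program every $\textsc{gsync}(p \to \diamond)$ has a matching, $||_{hb}$-parallel $\textsc{gsync}(q \to \diamond)$, so inserting \textsc{bar} immediately after each \textsc{gsync} guarantees matched barriers and hence no deadlock --- essentially your "symmetric collective plus purely local checkpoint write" argument. Where you diverge is the consistency half. The paper's proof is a one-line semantic appeal: the \textsc{gsync} calls introduce a \emph{global consistency order} $\xrightarrow{co}$, so the saved state is coordinated and consistent; it never explicitly verifies the pairwise $||_{cohb}$ condition of Definition~1. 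You instead prove the stronger pairwise $C_p^i\ ||_{hb}\ C_q^j$ directly from the placement of the checkpoints in the quiet window after the collective, and then use the fact that $\xrightarrow{cohb}$ is the \emph{conjunction} of $\xrightarrow{co}$ and $\xrightarrow{hb}$, so unorderedness in either component already yields $||_{cohb}$. Both routes are valid precisely because of that conjunction: the paper discharges the $\xrightarrow{co}$ conjunct, you discharge the $\xrightarrow{hb}$ conjunct. Your version is more faithful to the letter of the definition and makes explicit the "no backward $\xrightarrow{so}$ crossing" step that the paper leaves entirely implicit (including the case split on whether \textsc{gsync} itself induces $\xrightarrow{hb}$ or the optional \textsc{bar} must supply it); the paper's version speaks more directly to the semantic intent --- that no checkpoint records the effect of an access that was never issued --- which your closing sanity check about the absence of intervening $\xrightarrow{co}$ witnesses also covers.
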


\htor{you need to define what the RMA consistency condition is,
formally! ``Definition 1: RMA Consistency'' ...}

\begin{proof}
We assume correct MPI-3 RMA programs represented by their trace $\mathcal{D}$~\cite{hoefler2013remote,mpi3}. For
all $p,q \in \mathcal{P}$, each $\textsc{gsync}(p \to \diamond)$ has a
matching $\textsc{gsync}(q \to \diamond)$ such that $[\textsc{gsync}(p
\to \diamond)\ ||_{hb}\ \textsc{gsync}(q \to \diamond)]$.  Thus, if
every process calls \textsc{bar} right after \textsc{gsync} 
then \textsc{bar} matching is guaranteed and the program cannot deadlock. In addition, the \textsc{gsync} calls
introduce a global consistency order $\xrightarrow{co}$ such that the
checkpoint is coordinated and consistent. 
\end{proof}

\goal{+ Describe the ``Locks'' coordinated scheme}

\textbf{The ``Locks'' Scheme }
Every process $p$ maintains a local \emph{Lock Counter} $LC_p$
that starts with zero and is incremented after each \textsc{lock} and
decremented after each \textsc{unlock}. When $LC_p = 0$, process $p$
can perform a checkpoint in three phases: (1) enforce consistency with a
\textsc{flush}$(p \to \diamond)$, (2) call a \textsc{bar} to provides
the global $\xrightarrow{hb}$ order, and (3) take a 
checkpoint $C_{p}^{i}$.  
The last phase, the actual checkpoint stage, is performed collectively
thus all processes can take the checkpoint $C$ in coordination.


\goal{+ Prove that in ``Locks'' a checkpoint is always eventually taken}

\begin{theorem}
The Locks scheme satisfies the RMA-consistency condition and does not deadlock.
\end{theorem}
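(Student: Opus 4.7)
My plan is to follow the template of the Gsync proof: establish the RMA-consistency half by showing that each local checkpoint is immediately preceded by the pair $\textsc{flush}(p \to \diamond);\textsc{bar}$, and rule out deadlock by showing every process eventually crosses the collective \textsc{bar}. The novelty relative to Gsync is that synchronization is not wholly symmetric (there is no global \textsc{gsync} we can anchor everything to), so the lock counter $LC_p$ takes on the role of a local gate for entering the checkpoint phase.

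For consistency, I would fix two checkpoints $C_p^i, C_q^j \in C$. The preceding $\textsc{flush}(p \to \diamond)$ closes every epoch owned by $p$, so every communication action $a$ issued by $p$ satisfies $a \xrightarrow{co} C_p^i$, and symmetrically at $q$. The collective \textsc{bar} supplies a global $\xrightarrow{hb}$ edge between anything executed before it and anything executed after it. Since by construction neither process performs any RMA action between its \textsc{bar} and its checkpoint, no $\xrightarrow{cohb}$ edge can be drawn from $C_p^i$ to $C_q^j$ or back, giving $C_p^i\ ||_{cohb}\ C_q^j$, which is exactly Definition~1.

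For deadlock-freedom I would split the argument in two. First, each $p$ reaches $LC_p = 0$ arbitrarily often: correct MPI-3 RMA programs pair every \textsc{lock} with a matching \textsc{unlock} in $\xrightarrow{po}$, so $LC_p$ returns to zero between any two checkpoint opportunities chosen by Daly's schedule~\cite{Daly:2006:HOE:1134241.1134248}. Second, once $LC_p = 0$ the process only executes $\textsc{flush}(p \to \diamond)$ (which completes only operations that $p$ itself issued and therefore cannot stall on a peer) followed by \textsc{bar}. Matching of the \textsc{bar} is guaranteed because the Daly interval is computed identically on every process, so all processes attempt the same checkpoint round.

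The hard part will be ruling out a subtle circular wait: suppose $p$ is blocked with $LC_p > 0$ on a lock currently held by $q$, while $q$ is already sitting inside \textsc{bar} of the same checkpoint round. I would close this gap by observing that $q$ can only enter \textsc{bar} once $LC_q = 0$, meaning $q$ has released every lock it has ever acquired; hence the lock $p$ is waiting on cannot be one that $q$ is withholding inside \textsc{bar}. This reduces deadlock-freedom of the instrumented code to deadlock-freedom of the original lock-based program, which is the standing assumption of the model in \S\ref{sec:formal_model}.
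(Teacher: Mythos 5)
Your proposal is correct and follows essentially the same route as the paper: the \textsc{flush}$(p \to \diamond)$ plus \textsc{bar} pair yields RMA-consistency, and deadlock-freedom rests on the two observations that every \textsc{lock} has a matching \textsc{unlock} in $\xrightarrow{po}$ (so $LC_p$ returns to zero) and that a process can only enter \textsc{bar} when $LC=0$, i.e., while holding no locks, which is precisely the paper's remark that no \textsc{bar} can be started at a process while it holds a lock. Your treatment of the circular-wait case and of \textsc{bar} matching via the shared Daly schedule is merely a more explicit spelling-out of the same argument.
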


\begin{proof}
The call to \textsc{flush}$(p \to \diamond)$ in phase 1 guarantees global
consistency at each process. The \textsc{bar} in phase 2 guarantees that
all processes are globally consistent before the checkpoint taken in phase~3.

It remains to proof deadlock-freedom.
We assume correct MPI-3 RMA programs~\cite{hoefler2013remote,mpi3}. 
A $\textsc{lock}(p \to q)$ can only block waiting for an active lock
$\textsc{lock}(z \to q)$ and no \textsc{bar} can be started at $z$ while
the lock is held.
In addition, for
every $\textsc{lock}(z \to q)$, there is a matching $\textsc{unlock}(z
\to q)$ in the execution such that $\textsc{lock}(z \to q)
\xrightarrow{po} \textsc{unlock}(z \to q)$ (for any $z,p,q \in
\mathcal{P}$).  
Thus, all locks must be released eventually, i.e., $\exists a \in
\mathcal{E}:\ a \xrightarrow{po} \textsc{write}(LC_p := 0,p)$ for any $p
\in \mathcal{P}$.
\end{proof}

The above schemes show that the transparent CC can be achieved much
simpler in RMA than in MP.  In MP, such protocols usually have to
analyze inter-process dependencies due to sent/received messages, and
add protocol-specific data to
messages~\cite{Elnozahy:2002:SRP:568522.568525,Chandy:1985:DSD:214451.214456},
which reduces the bandwidth.  In RMA this is not necessary.


\subsection{Uncoordinated Checkpointing (UC)}
\label{sec:uncoordinated_ckp}

Uncoordinated checkpointing augmented with message logging
reduces energy consumption and synchronization costs
because a single process crash does not force all other processes 
to revert to the previous checkpoint and recompute~\cite{Riesen:2012:ASI:2388996.2389021,Elnozahy:2002:SRP:568522.568525}.
Instead, a failed process fetches its last checkpoint and replays messages
logged beyond this checkpoint. However, UC schemes are usually more
complex than CC~\cite{Elnozahy:2002:SRP:568522.568525}.
We now analyze how UC in RMA differs from UC in MP, followed by a
discussion of our UC protocols.


\subsubsection{RMA vs. MP: Uncoordinated Checkpointing}
\label{sec:rma_vs_mp_ucc}

\goal{+ Explain why UC differs in MP and RMA}


The first and obvious difference is that we now log not \emph{messages}
but \emph{accesses}. Other differences are as follows:

\textbf{Storing Access Logs}
In MP, processes exchange messages
that \emph{always} flow \emph{from} the sender (process $p$) \emph{to} the
receiver (process $q$). Messages can be recorded at the sender's side~\cite{Riesen:2012:ASI:2388996.2389021,Elnozahy:2002:SRP:568522.568525}. During a
recovery, the restored process interacts with other processes to get and reply
the logged messages (see Figure~\ref{fig:mp_rma_simple} (part (1)).

\begin{figure}[h!]
\centering
\includegraphics[width=0.48\textwidth]{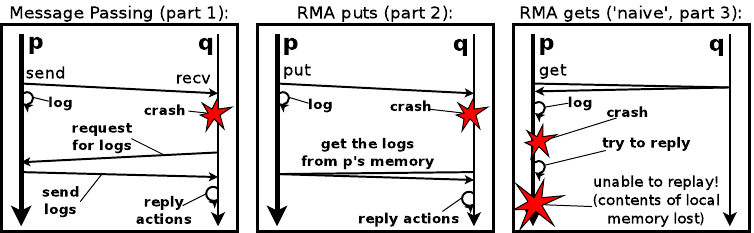}
\caption{The logging of messages vs. RMA {puts} and {gets} (\cref{sec:rma_vs_mp_cc}).}
\label{fig:mp_rma_simple}
\end{figure}

\goal{+ Describe logging puts/gets and why it differs from ML}

In RMA, a \textsc{put}\Dptoq\ changes the state of $q$, but a \textsc{get}\Dpfromq\
modifies the state of $p$. Thus, \textsc{put}\Dptoq\ can be logged in $p$'s
memory, but \textsc{get}\Dpfromq\ cannot because a failure of $p$ would prevent a successful
recovery (see Figure~\ref{fig:mp_rma_simple}, part 2 and 3).

\goal{+ Say why in MP resilience schemes obstruct more processes}

\textbf{Transparency of Schemes } 
In MP, both $p$ and $q$ actively participate in communication. In RMA,
$q$ is oblivious to accesses to its
memory and thus any recovery or logging
performed by $p$ can be \emph{transparent} to (i.e., does not obstruct) $q$ (which is usually \emph{not} the case
in MP, cf.~\cite{Riesen:2012:ASI:2388996.2389021}).
%

\goal{+ Explain piggybacking and why it can't be used in RMA}

\textbf{No Piggybacking } 
Adding some protocol-specific data to messages (e.g., \emph{piggybacking})
is a popular concept in MP~\cite{Elnozahy:2002:SRP:568522.568525}. Still, it cannot be used in RMA because
\textsc{put}s and \textsc{get}s are {accesses}, not {messages}. Yet,
issuing additional accesses is cheap in RMA.

\goal{+ Compare and explain send determinism and access determinism}

\textbf{Access Determinism }
Recent works in MP (e.g.,~\cite{6012907}) explore \emph{send determinism}: the output of an application run is oblivious to the order of
received messages. In our work we identify a similar concept in RMA that
we call \emph{access determinism}. For example, in race-free MPI-3
programs the application output does not depend on the order in which
two accesses $a$ and $b$ committed to memory if $a\ ||_{co}\ b$.





%

\goal{+ Say why causal recovery in RNA is more complex}



\goal{+ Explain orphan processes in MP}

\textbf{Orphan Processes } 
%
In some MP schemes (called \emph{optimistic}), senders postpone logging messages for
performance reasons~\cite{Elnozahy:2002:SRP:568522.568525}. 
Assume $q$ received a message $m$ from $p$ and then sent a message $m'$ to $r$. If $q$ crashes
and $m$ is not logged by $p$ at that time, then $q$ may follow a run in
that it
\emph{does not} send $m'$. Thus, $r$ becomes an \emph{orphan}: its state
depends on a message $m'$ that was \emph{not} sent~\cite{Elnozahy:2002:SRP:568522.568525} (see Figure~\ref{fig:orphan_locks}, part 1).

\goal{+ Describe orphans in RMA}

In RMA, a process may also become an orphan. Consider
Figure~\ref{fig:orphan_locks} (part 2). First, $p$ modifies a variable $x$
at $q$. Then, $q$ reads $x$ and conditionally issues a \textsc{put}\Dqtor. If $q$ crashes and $p$ postponed logging \textsc{put}\Dptoq\@, then $q$ (while recovering) may follow a run in which it does not issue \textsc{put}\Dqtor; thus $r$ becomes an orphan.


\begin{figure}[h!] \centering
\includegraphics[width=0.44\textwidth]{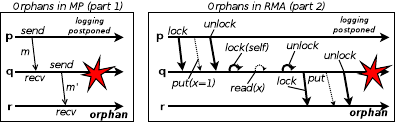}
\vspace{-0.5em}
\caption{Illustration of orphans in MP and RMA (\cref{sec:rma_vs_mp_cc}).} 
\label{fig:orphan_locks}
\vspace{-1.0em}
\end{figure}

%
%
%

\begin{figure*}
\centering
\includegraphics[width=1.0\textwidth]{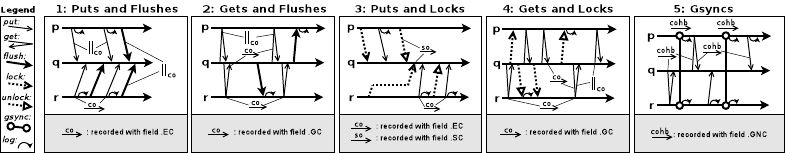}
\caption{Logging orders $\xrightarrow{so}$, $\xrightarrow{co}$, and $\xrightarrow{hb}$ (\cref{sec:logging_order_info}). In each figure we illustrate example orderings.}
\label{fig:logging_orders_1}
\end{figure*}

\subsubsection{Taking an Uncoordinated Checkpoint}
\label{sec:taking_uncoordinated_ckp}

\goal{+ Describe why and how we do uncoordinated checkpoints}

We denote the $i$th uncoordinated checkpoint taken by process $p$ as
$C_{p}^{i}$. 
%
Taking $C_{p}^{i}$ is simple and entails: (1) locking local application data,
(2) sending the copy of the data to some remote volatile storage, and (3) unlocking the application data
(we defer the discussion on the implementation details until Section~\ref{sec:libImplementation}).
After $p$ takes $C_{p}^{i}$, any process $q$ can delete the logs of every
\textsc{put}\Dqtop\ (from $LP_q[p]$) and \textsc{get}\Dpfromq\ (from $LG_q[p]$) 
that committed in $p$'s memory before $C_{p}^{i}$
(i.e., \textsc{put}\Dqtop\ $\xrightarrow{co} C_{p}^{i}$ and 
\textsc{get}\Dpfromq\ $\xrightarrow{co} C_{p}^{i}$).

\goal{+ Describe a correctness condition for uncoordinated checkpoints}

We demand that every $C_{p}^{i}$ is taken \emph{immediately after}
closing/opening an epoch and \emph{before} issuing any new communication
operations (we call this the \emph{epoch condition}).
This condition is required because, if $p$ issues a \textsc{get}\Dpfromq\@, 
the application data is guaranteed to be consistent only after closing the epoch.


\subsubsection{Transparent Logging of RMA Accesses}
\label{sec:transparentLogging}



\goal{+ Say why we want logging in RMA}



%
We now describe the logging of \textsc{put}s and \textsc{get}s; all necessary data structures are shown in Table~\ref{tab:str}.

\begin{table}
{\centering
\small
\begin{tabular}{>{\centering\arraybackslash}m{1.5cm} p{6.3cm}} \toprule
\textbf{Structure}&\textbf{Description}\\ \midrule

$LP_{p}[q] \in \mathcal{S}$&\makecell[l]{Logs of \textsc{put}s issued by $p$ and targeted at
$q$.}\\ \midrule

$LG_{q}[p] \in \mathcal{S}$&\makecell[l]{Logs of \textsc{get}s targeted at $q$ and issued by
$p$.}\\ \midrule

$LP_{p} \in \mathcal{S}$&\makecell[l]{Logs of \textsc{put}s issued and stored by $p$ and targeted\\at any
other process; $LP_{p} \equiv \bigcup_{r \in \mathcal{P} \wedge r \neq p}
LP_{p}[r]$.}\\ \midrule

$LG_{q} \in \mathcal{S}$&\makecell[l]{Logs of $gets$ targeted and stored at $q$, issued by\\any
other process; $LG_{q} \equiv \bigcup_{r \in \mathcal{P} \wedge r \neq q}
LG_{q}[r]$.}\\
\midrule

$Q_p \in \mathcal{S}$&\makecell[l]{A helper container stored at $p$, used to\\temporarily log \#\textsc{get}s issued by $p$.}\\
\midrule

$N_q[p] \in \mathcal{S}$&\makecell[l]{A structure (stored at $q$) that determines\\whether or not $p$ issued a non-blocking\\ \textsc{get}\Dptoq\ ($N_q[p] = true$ or $false$, respectively)}\\
\bottomrule


\end{tabular} \normalsize}



\caption{Data structures used in RMA logging (\cref{sec:transparentLogging}). $LP_{p}[q]$ and $LP_p$ are stored at $p$. $LG_{q}[p]$ and $LG_q$ are stored at $q$.}
\label{tab:str}
\end{table}

\goal{++ Describe logging of puts}

\textbf{Logging Puts } 
%
%
%
%
%
To log a \textsc{put}\Dptoq\@, $p$ first calls \textsc{lock}$(p \rightarrow p, LP_{p})$.
Self-locking is necessary because there may be other processes being
recovered that may try to read $LP_p$. Then, the \textsc{put} is logged ($LP_{p}[q] := LP_{p}[q]\ \cup\ \{$\textsc{put}\Dptoq\};  ``:='' denotes the assignment of
a new value to a variable or a structure).
Finally, $p$ unlocks $LP_p$.
Atomicity between logging and putting is not required because, in the weak
consistency memory model, the source memory of the put operation may not be
modified until the current epoch ends. If the program modifies it nevertheless,
RMA implementations are allowed to return any value, thus the logged value is
irrelevant. We log \textsc{put}\Dptoq\ before closing
the epoch \textsc{put}\Dptoq$.EC$. If the \textsc{put} is blocking then we 
log it before issuing, analogously to the \emph{pessimistic} message logging~\cite{Elnozahy:2002:SRP:568522.568525}.



\goal{++ Describe logging of gets}


\textbf{Logging Gets }
%
%
We log a \textsc{get}\Dpfromq\ in two phases to retain its asynchronous behavior (see Algorithm~\ref{alg:log_gets}). First, we record the determinant \#\textsc{get}\Dpfromq\ in
$Q_p$ (lines 2-3). We cannot access \textsc{get}\Dpfromq$.data$ as the
local memory will only be valid after the epoch ends. We avoid issuing an
additional blocking \textsc{flush}\ptoq\@, instead we rely on the user's
call to end the epoch.
Second, when the user ends the epoch, we lock the remote log $LG_{q}$, record \textsc{get}\Dpfromq\@, and
unlock $LG_q$ (lines 4-7).



\goal{++ Describe how me manage faults happening before the epoch ends}

Note that if $p$ fails between issuing
\textsc{get}\Dpfromq\ and closing the epoch, it will not be able to replay it
consistently. To address this problem, $p$ sets $N_q[p]$ at process $q$ to \emph{true} right before issuing the first
\textsc{get}\Dptoq\ (line 1), and to \emph{false} after closing the epoch \textsc{get}\Dptoq$.EC$ (line 8). During the recovery,
if $p$ notices that any $N_q[p] = true$, it falls back to another resilience mechanism (i.e., the last
coordinated checkpoint). If the \textsc{get} is blocking then we set $N_q[p] = false$ after returning from the call.

\begin{algorithm}
\scriptsize
\DontPrintSemicolon
\KwIn{$get := \textsc{get}(p \mathrel{\substack{\textstyle\Leftarrow\\[-0.5ex]
                      \textstyle\rightarrow}} q)$}
                     
\tcc{\scriptsize Phase 1: starts right before issuing the $get$}
$N_q[p] := true$\;
\tcc{\scriptsize Now we issue the $get$ and log the $\#get$}
issue $\textsc{get}(p \mathrel{\substack{\textstyle\Leftarrow\\[-0.5ex]\textstyle\rightarrow}} q)$\;
$Q_{p} \gets Q_{p} \cup \#get$\;   

\tcc{\scriptsize Phase 2: begins after ending the epoch $get.EC$}
\textsc{lock}$(p \rightarrow q, LG_{q})$\;
$LG_{q}[p] := LG_{q}[p] \cup get$\;
$Q_{p} := Q_{p}\ \textbackslash\ \#get$\;   
\textsc{unlock}$(p \rightarrow q, LG_{q})$\;
$N_q[p] := false$\;
\caption{Logging \emph{gets} (\cref{sec:transparentLogging})}
\label{alg:log_gets}
\end{algorithm}

\vspace{-1em}
\section{Causal Recovery for UC}
\label{sec:complicated}

%

\goal{Describe in general a causal recovery and summarize the section}

We now show how to causally recover a failed process (\emph{causally} means preserving
$\xrightarrow{co}$, $\xrightarrow{so}$, and $\xrightarrow{hb}$).
This section describes technical details on how to guarantee all orders to
ensure a correct access replay. If the reader is not interested in all
details, she may proceed to Section~\ref{sec:divisionIntoGroups} without
disrupting the flow.
A {causal} process recovery has three phases: (1) fetching uncoordinated checkpoint data, 
(2) replaying accesses from remote logs, and (3) in case of a problem during the replay, falling
back to the last coordinated checkpoint.
We first show how we log the respective orderings between accesses (Section~\ref{sec:logging_order_info}) and
how we prevent replaying some accesses twice (Section~\ref{sec:managing_unc}). We finish with
our recovery scheme (Section~\ref{sec:recovery_scheme}) and a discussion (Section~\ref{sec:discussion_rec}).
%
%


\subsection{Logging Order Information}
\label{sec:logging_order_info}

\goal{+ Describe our explanation methodology in this section}

We now show how to record
$\xrightarrow{so}$, $\xrightarrow{hb}$, and $\xrightarrow{co}$.
For clarity, but without loss of generality, we
separately present several scenarios that exhaust possible
communication/synchronization patterns in our model. We
consider three processes ($p$, $q$, $r$) and we 
analyze what data is required to replay $q$. We show each
pattern in Figure~\ref{fig:logging_orders_1}.

\goal{+ Describe logging of consistency order when using puts/flushes}

\textbf{A. Puts and Flushes }
First, $p$ and $r$ issue \textsc{put}s and \textsc{flush}es 
at $q$. At both $p$ and $r$, \textsc{put}s separated by \textsc{flush}es are ordered 
with $\xrightarrow{co}$. This order is preserved by recording epoch counters ($.EC$) with
every logged \textsc{put}\Dptoq\@. Note that, however, RMA semantics
\emph{do not} order calls issued by $p$ and $r$: $[\textsc{put}$\Dptoq\
$||_{co}\ \textsc{put}$\Drtoq] without additional process
synchronization. 
Here, we assume \emph{access determinism}: the recovery output does not depend on the order in which such \textsc{put}s committed in $q$'s memory.


\goal{+ Describe logging of consistency order when using gets/flushes}

\textbf{B. Gets and Flushes }
Next, $q$ issues \textsc{get}s and \textsc{flush}es 
targeted at $p$ and $r$. Again, $\xrightarrow{co}$ has to be logged.
However, this time \textsc{get}s targeted at \emph{different}
processes \emph{are} ordered (because they are issued by the same process). To log this ordering, $q$ maintains a local \emph{Get Counter}
$GC_q$ that is incremented each time $q$ issues a \textsc{flush}$(q \to
\diamond)$ to any other process.
The value of this counter is logged with each \textsc{get} using the field $.GC$ (cf. Section~\ref{sec:formal_model}).

\goal{+ Describe logging of synchronization order when using puts/locks}

\textbf{C. Puts and Locks }
In this scenario $p$ and $r$ issue \textsc{put}s at $q$ and synchronize their accesses
with \textsc{lock}s and \textsc{unlock}s. This pattern requires logging the $\xrightarrow{so}$ order.
We achieve this with a \emph{Synchronization Counter} $SC_q$ stored at $q$. After issuing a \textsc{lock}\ptoq\@, $p$ (the same refers to $r$) fetches the value of $SC_q$, increments it, updates remote $SC_q$, and records it with every \textsc{put} using the field
$.SC$ (cf. Section~\ref{sec:formal_model}). In addition, this scenario requires recording $\xrightarrow{co}$ that we solve with $.EC$, analogously as in the ``Puts and Flushes'' pattern.

\goal{+ Describe logging of synchronization order when using gets/locks}

\textbf{D. Gets and Locks }
Next, $q$ issues \textsc{get}s and uses \textsc{lock}s 
targeted at $p$ and $r$. This pattern is solved analogously to
the ``Gets and Flushes'' pattern.

\goal{+ Describe logging of consistency and HB order when using gsyncs}

\textbf{E. Gsyncs }
The final pattern are \textsc{gsync}s (that may again introduce $\xrightarrow{hb}$) combined with any communication
action. Upon a \textsc{gsync}, each process $q$ increments its
\emph{GsyNc Counter} $GNC_q$ that is logged in an actions' $.GNC$ field
(cf. Section~\ref{sec:formal_model}).

\begin{algorithm}[h!]
\scriptsize
\SetAlgoLined\DontPrintSemicolon
\SetKwFunction{recovery}{recovery}
\SetKwFunction{logsWithMinCnt}{logsWithMinCnt}
\SetKwFunction{replayEachAction}{replayEachAction}
\SetKwFunction{fetchCheckpointData}{fetchCheckpointData}
\SetKwProg{myalg}{Function}{}{}

\myalg{\recovery{}}{
	fetch\_checkpoint\_data()\;
	put\_logs := \{\}; get\_logs := \{\}\;
	\ForAll{$q \in \mathcal{P}:\ q \neq p_{new}$}{
		\textsc{lock}$(p_{new} \rightarrow q)$\;
		\If{$N_q[p_f] = 1 \lor M_q[p_f] = true$}{
			\tcc{\scriptsize Stop the recovery and fall back to the last coordinated checkpoint}
		}
		put\_logs := put\_logs\  $\cup LP_q[p_f]$\;
		get\_logs := get\_logs\  $\cup LG_q[p_f]$\;
		\textsc{unlock}$(p_{new} \rightarrow q)$\;
	}

	\While{|put\_logs| > 0 $\lor$ |get\_logs| > 0}{
		gnc\_logs := logsWithMinCnt(GNC, put\_logs $\cup$ get\_logs)\;
	
		\While{|gnc\_logs| > 0}{
			gnc\_put\_logs := gnc\_logs $\cap$ put\_logs\;
			gnc\_get\_logs := gnc\_logs $\cap$ get\_logs\;
			ec\_logs := logsWithMinCnt(EC, gnc\_put\_logs)\;
			gc\_logs := logsWithMinCnt(GC, gnc\_get\_logs)\;
			replayEachAction(ec\_logs)\;
			replayEachAction(gc\_logs)\;
			gnc\_logs := gnc\_logs \textbackslash\ (ec\_logs $\cup$ gc\_logs)\;
		}
		put\_logs := put\_logs \textbackslash\ gnc\_logs\; 
		get\_logs := get\_logs \textbackslash\ gnc\_logs\; 
	}
	\KwRet\;
}
\SetKwProg{myproc}{Function}{}{}

\myalg{\logsWithMinCnt{Counter, Logs}}{
\scriptsize
\tcc{Return a set with logs from $Logs$ that have the smallest value of the specified counter (one of:$GNC, EC, GC, SC$).}
}

\myalg{\replayEachAction{Logs}}{
\scriptsize
\tcc{Reply each log from set $Logs$ in any order.}
}

\myalg{\fetchCheckpointData{}}{
\scriptsize
\tcc{Fetch the last checkpoint and load into the memory.}
}

%
%

\caption{The causal recovery scheme for codes that synchronize with \textsc{gsync}s (\cref{sec:recovery_scheme}, \cref{sec:discussion_rec}).}
\label{alg:recovery}
\end{algorithm}

\subsection{Preventing Replaying Accesses Twice}
\label{sec:managing_unc}

\goal{+ Describe the problem of puts that modify memory twice}

Assume that process $p$ issues a \textsc{put}\Dptoq\ (immediately logged
by $p$ in $LP_p[q]$) such that \textsc{put}\Dptoq\ $\xrightarrow{co}
C_{q}^{j}$. It means that the state of $q$ recorded in checkpoint
$C_{q}^{j}$ is affected by \textsc{put}\Dptoq\@. Now assume that $q$
fails and begins to replay the logs. If $p$ did not delete the log of \textsc{put}\Dptoq\ from $LP_p[q]$ (it was allowed to do it after $q$ took $C_{q}^{j}$), then $q$ replays \textsc{put}\Dptoq\ and this \textsc{put} affects its memory \emph{for the second time}. This is not a problem if \textsc{put}\Dptoq$.combine = false$, because such a \textsc{put}
always overwrites the memory region with the same value. However, if \textsc{put}\Dptoq$.combine = true$, then $q$ ends up
in an inconsistent state (e.g., if this \textsc{put} increments a memory cell, this cell will be incremented twice).

\goal{+ Describe the solution to the above problem}

To solve this problem, every process $p$ maintains a local structure $M_p[q] \in \mathcal{S}$. When $p$ issues and logs a \textsc{put}\Dptoq\ such that \textsc{put}\Dptoq$.combine = true$, it sets $M_p[q] := true$. When $p$ deletes \textsc{put}\Dptoq\ from its logs, it sets $M_p[q] := false$. If $q$ fails, starts to recover, and sees that any $M_p[q] = true$, it stops the recovery and falls back to the coordinated checkpoint. This scheme is valid if access determinism is assumed. Otherwise we set $M_p[q] := true$ regardless of the value of \textsc{put}\Dptoq$.combine$; we use the same approach if $q$ can issue \textsc{write}s to the memory regions accessed with remote \textsc{put}s parallel in $||_{co}$ to these \textsc{write}s.

\vspace{+2.2em}
\subsection{Recovering a Failed Process}
\label{sec:recovery_scheme}

\goal{+ Describe the first part of the recovery}

We now describe a protocol for codes that synchronize with \textsc{gsync}s.
Let us denote the failed process as $p_{f}$. We assume an underlying batch system that provides a new process $p_{new}$ in the place of $p_{f}$, and that other processes resume their communication with $p_{new}$ after it fully recovers. We illustrate the scheme in Algorithm~\ref{alg:recovery}. First, $p_{new}$ fetches the checkpointed data.
%
Second, $p_{new}$ gets the logs of \textsc{put}s (put\_logs) and
\textsc{get}s (get\_logs) related to $p_f$ (lines 3-11). It also checks
if any $N_q[p_f] = true$ (see~\cref{sec:transparentLogging}) or
$M_q[p_f] = true$ (see~\cref{sec:managing_unc}), if yes it instructs all
processes to roll back to the last coordinated checkpoint. The protocol uses \textsc{lock}s (lines 5,10) to prevent data races due to, e.g., concurrent recoveries and log cleanups by $q$.

\goal{+ Describe the main part of the recovery}

The main part (lines 12-27) replays accesses causally. The recovery ends when there are no logs left (line 12; $|logs|$ is the size of the set ``logs''). We first get the logs with the smallest $.GNC$ (line 13) to maintain $\xrightarrow{cohb}$ introduced by \textsc{gsync}s (see~\cref{sec:logging_order_info} E). Then, within this step, we find the logs with minimum $.EC$ and $.GC$ to preserve $\xrightarrow{co}$ in issued \textsc{put}s and \textsc{get}s, respectively (lines 18-19, see~\cref{sec:logging_order_info} A, B). We replay them in lines 20-21. 

\begin{theorem}
The recovery scheme presented in Algorithm~\ref{alg:recovery} replays each fetched action exactly once.
\end{theorem}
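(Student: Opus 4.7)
The plan is to decompose the argument into three layers — the fetching phase, the outer \texttt{while} loop (which I would argue partitions the work by $GNC$), and the inner \texttt{while} loop (which partitions the remaining work by $EC$ for puts and $GC$ for gets) — and then show that \texttt{replayEachAction} is invoked on each fetched log exactly once across all iterations.

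First I would observe that lines 3--11 visit each remote process $q \neq p_{new}$ exactly once and union $LP_q[p_f]$ into \texttt{put\_logs} and $LG_q[p_f]$ into \texttt{get\_logs}. Because puts and gets are stored in disjoint structures, these two sets remain disjoint, and every fetched action appears in exactly one of them. Next, for the outer loop I would read ``\texttt{put\_logs := put\_logs} $\setminus$ \texttt{gnc\_logs}'' as referring to the value of \texttt{gnc\_logs} fixed at the top of the iteration (line~13), before the inner loop consumes it; under this reading each outer iteration selects exactly the subset of $\texttt{put\_logs} \cup \texttt{get\_logs}$ sharing the current minimum $GNC$, and then strictly shrinks $\texttt{put\_logs} \cup \texttt{get\_logs}$ by that subset, yielding termination.

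For the inner loop I would maintain the invariant $\texttt{ec\_logs} \subseteq \texttt{gnc\_put\_logs}$ and $\texttt{gc\_logs} \subseteq \texttt{gnc\_get\_logs}$, so $\texttt{ec\_logs} \cap \texttt{gc\_logs} = \emptyset$ by type. Each inner iteration calls \texttt{replayEachAction} on \texttt{ec\_logs} and on \texttt{gc\_logs}, thereby replaying each of their elements once, and then removes $\texttt{ec\_logs} \cup \texttt{gc\_logs}$ from \texttt{gnc\_logs}. Whenever \texttt{gnc\_logs} is non-empty, at least one of \texttt{gnc\_put\_logs} or \texttt{gnc\_get\_logs} is non-empty, so \texttt{logsWithMinCnt} returns a non-empty set and $|\texttt{gnc\_logs}|$ strictly decreases; therefore every element of the original \texttt{gnc\_logs} is replayed exactly once before the inner loop terminates.

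The main obstacle is ruling out a second replay of the same action in a later outer iteration. I would handle this by combining two facts: within a single outer iteration the inner loop replays each element of the original \texttt{gnc\_logs} exactly once (shown above), and between outer iterations the sets of selected $GNC$-minima are disjoint because lines 26--27 remove the entire original \texttt{gnc\_logs} from $\texttt{put\_logs} \cup \texttt{get\_logs}$ before the next outer iteration recomputes a new minimum. Together these give a partition of the initially fetched $\texttt{put\_logs} \cup \texttt{get\_logs}$ into the sets replayed in successive \texttt{replayEachAction} calls, so each fetched action is replayed exactly once. The fall-back branches on lines 6--7 abort the replay entirely and thus do not threaten the ``exactly once'' property on the runs where the theorem applies.
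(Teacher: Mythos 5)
Your proof is correct and follows essentially the same decomposition as the paper's: the outer loop partitions the fetched logs by minimum $GNC$, the inner loop drains \texttt{gnc\_logs} via the $EC$/$GC$ minima, and the subtraction at the end of the outer iteration prevents re-selection. The one place where you go beyond the paper is worth noting: you explicitly flag that lines 24--25 only make progress if \texttt{gnc\_logs} there denotes the value computed in line 13, and you adopt that reading. The paper's own proof glosses over this --- it simultaneously asserts that \texttt{gnc\_logs} is empty when the inner loop exits \emph{and} that lines 24--25 subtract \texttt{gnc\_logs} to guarantee each action is extracted only once; taken literally, subtracting the now-empty set would remove nothing and the outer loop would re-replay the same minimum-$GNC$ actions forever. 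Your charitable reading (or, equivalently, storing a copy of the line-13 set before the inner loop consumes it) is the repair the algorithm actually needs, and your explicit termination arguments for both loops and the disjointness of \texttt{ec\_logs} and \texttt{gc\_logs} by operation type fill in steps the paper leaves implicit. No gaps.
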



\begin{proof}
Consider the gnc\_logs set obtained in line 13. The definition of function logsWithMinCnt ensures that, after executing the action in line 13 and before entering the loop that starts in line 15, every $a \in $ gnc\_logs has identical $a.GNC$. Then, 
the condition in line 15 together with the actions in line 22 and the definition of logsWithMinCnt ensure that gnc\_logs is empty when the loop in lines 15-23 exits (all logs in gnc\_logs are replayed). This result, together with the actions in lines 24-25, guarantee that each action $a$ obtained in the lines 4-11 is extracted from log\_puts $\cup$ log\_gets in line 13 and replayed exactly once.
\end{proof}

\begin{theorem}
The recovery scheme presented in Algorithm~\ref{alg:recovery} preserves the $\xrightarrow{cohb}$ order introduced by \textsc{gsync}s (referred to as the \emph{gsync order}).
\end{theorem}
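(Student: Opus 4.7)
The plan is to show that whenever two replayed actions $a,b$ satisfy $a \xrightarrow{cohb} b$ with the order induced by a \textsc{gsync}, the algorithm replays $a$ strictly before $b$. The key observation is that every process maintains its own counter $GNC$, that each logged action stamps its issuing process' current $GNC$ into the $.GNC$ field (cf.\ \cref{sec:logging_order_info}~E), and that the outer \textbf{while} loop (lines 12--27) always extracts the subset of remaining logs with the smallest $.GNC$ before processing anything else.

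First I would fix two actions $a,b$ in $\mathrm{put\_logs} \cup \mathrm{get\_logs}$ such that there exists a \textsc{gsync} call $g$ with $a \xrightarrow{cohb} g \xrightarrow{cohb} b$. Because $g$ is collective, every process $r \in \mathcal{P}$ increments $GNC_r$ exactly once at $g$. Let $r_a$ (resp.\ $r_b$) be the process that issued $a$ (resp.\ $b$). Since $a$ was issued by $r_a$ before $r_a$ participated in $g$, we have $a.GNC = GNC_{r_a}$ at the moment $a$ was logged, which is strictly less than the value of $GNC_{r_a}$ after $g$. By a symmetric argument on $r_b$ and by the fact that all processes agree on the number of \textsc{gsync}s that have $\xrightarrow{hb}$-preceded $b$, we obtain $a.GNC < b.GNC$.

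Next I would argue that the algorithm respects this inequality. By the definition of \texttt{logsWithMinCnt} invoked in line~13, the set \texttt{gnc\_logs} extracted in any iteration of the outer loop contains only logs whose $.GNC$ equals the current minimum over the remaining logs. In particular, $b$ can only be selected once every log with strictly smaller $.GNC$ has been removed from $\mathrm{put\_logs} \cup \mathrm{get\_logs}$. By the previous theorem (each action is replayed exactly once) together with lines~22, 24, 25, the only way for $a$ to leave the set is via the calls \texttt{replayEachAction} in lines 20--21. Hence $a$ is replayed in some earlier iteration of the outer loop than $b$, preserving the gsync order.

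The only subtlety, and the step I would be most careful with, is the cross-process argument linking $a.GNC$ and $b.GNC$: it relies on the invariant that whenever $a \xrightarrow{hb} g$ for a \textsc{gsync} $g$, the stamp $a.GNC$ is strictly smaller than the stamp of any action $b$ with $g \xrightarrow{hb} b$, regardless of which processes issued $a$ and $b$. This invariant follows from the collective semantics of \textsc{gsync} (every process increments its $GNC$ at $g$) combined with the fact that $\xrightarrow{hb}$ transitions through \textsc{gsync}s agree on a global count of preceding \textsc{gsync}s, but it is worth stating explicitly rather than leaving it implicit in the counter maintenance code of \cref{sec:logging_order_info}~E.
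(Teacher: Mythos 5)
Your proof is correct and follows essentially the same route as the paper's: both reduce the claim to the fact that \texttt{logsWithMinCnt} in line~13 always extracts the remaining logs of minimum $.GNC$, so no action can be replayed ($\xrightarrow{po}$-earlier at $p_{new}$) than one carrying a strictly smaller stamp. The only difference is that the paper argues by contradiction and simply identifies the gsync order with the $.GNC$ ordering, whereas you additionally prove that identification (the cross-process invariant $a.GNC < b.GNC$ derived from the collective semantics of \textsc{gsync}), which the paper leaves implicit in \cref{sec:logging_order_info}~E --- a useful piece of added rigor rather than a divergence.
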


\begin{proof}
Let us denote the action that replays communication action $a$ at process $p_{new}$ as $\mathcal{R}(a,p_{new}) \in \mathcal{I}$ (as $\mathcal{R}(a,p_{new})$ affects only the memory of the calling process $p_{new}$, it is an internal action). 
%
%
Assume by contradiction that the gsync order is not preserved while recovering. Thus, $\exists a_1, a_2 \in \text{log\_puts} \cup \text{log\_gets}:\ (a_1.GNC > a_2.GNC) \wedge (\mathcal{R}(a_1,p_{new}) \xrightarrow{po} \mathcal{R}(a_2,p_{new}))$.
It means that the action of including $a_1$ into gnc\_logs (line 13) took place before the analogous action for $a_2$ (in the $\xrightarrow{po}$ order). But this contradicts the definition of function logsWithMinCnt($GNC$, set) that returns all the actions from set that have the minimum value of the $GNC$ counter.
\end{proof}

\begin{algorithm}[h!]
\scriptsize
\SetAlgoLined\DontPrintSemicolon
\SetKwFunction{recovery}{recovery}
\SetKwFunction{logsWithMinCnt}{logsWithMinCnt}
\SetKwFunction{replayEachAction}{replayEachAction}
\SetKwFunction{fetchCheckpointData}{fetchCheckpointData}
\SetKwProg{myalg}{Function}{}{}

\myalg{\recovery{}}{
	fetch\_checkpoint\_data()\;
	put\_logs := \{\}\;
	\ForAll{$q \in \mathcal{P}:\ q \neq p_{new}$}{
		\textsc{lock}$(p_{new} \rightarrow q)$\;
		\If{$M_q[p_f] = true$}{
			\tcc{\scriptsize Stop the recovery and fall back to the last coordinated checkpoint}
		}
		put\_logs := put\_logs\  $\cup LP_q[p_f]$\;
		\textsc{unlock}$(p_{new} \rightarrow q)$\;
	}

	\While{|put\_logs| > 0}{
		sc\_put\_logs := logsWithMinCnt(SC, put\_logs)\;
	
		\While{|sc\_put\_logs| > 0}{
			ec\_logs := logsWithMinCnt(EC, sc\_put\_logs)\;
			replayEachAction(ec\_logs)\;
			sc\_put\_logs := sc\_put\_logs \textbackslash\ ec\_logs\;
		}
		put\_logs := put\_logs \textbackslash\ sc\_put\_logs\; 
	}
	\KwRet\;
}
\SetKwProg{myproc}{Function}{}{}

\myalg{\logsWithMinCnt{Counter, Logs}}{
\scriptsize
\tcc{Return a set with logs from $Logs$ that have the smallest value of the specified counter (one of:$GNC, EC, GC, SC$).}
}

\myalg{\replayEachAction{Logs}}{
\scriptsize
\tcc{Reply each log from set $Logs$ in any order.}
}

\myalg{\fetchCheckpointData{}}{
\scriptsize
\tcc{Fetch the last checkpoint and load into the memory.}
}

%
%

\caption{The causal recovery scheme for codes that synchronize with \textsc{lock}s and communicate with \textsc{put}s (\cref{sec:recovery_scheme}, \cref{sec:discussion_rec}).}
\label{alg:recovery_locks}
\end{algorithm}

We now present a recovery scheme for codes that synchronize with \textsc{lock}s and communicate with \textsc{put}s.
The first part of the scheme is identical to the one that targets \textsc{gsync}s; the difference is that we do not have to check the values of $N_q[p_f]$.

\goal{+ Describe the main part of the recovery}

In the main part (lines 11-20) actions are replayed causally. We first get the logs with the smallest $.SC$ (line 12) to maintain $\xrightarrow{so}$ introduced by \textsc{lock}s (see~\cref{sec:logging_order_info} C). Then, within this step, we find the logs with minimum $.EC$ to preserve the $\xrightarrow{co}$ order (line 4, see~\cref{sec:logging_order_info} A). The \textsc{put}s are replayed in line 15.


%


%
%

\subsection{Discussion}
\label{sec:discussion_rec}

\goal{+ Describe the memory-performance tradeoffs}

%
Our recovery schemes present a trade-off between memory efficiency and time to recover.
Process $p_{new}$ fetches all related logs and only then begins to replay accesses. Thus,
we assume that its memory has capacity to contain put\_logs and get\_logs; a reasonable
assumption if the user program has regular communication patterns (true for most of today's
RMA applications~\cite{fompi-paper}). A more memory-efficient scheme fetches logs
while recovering. This incurs performance issues as $p_{new}$ has to access remote logs multiple times.


%

\section{Extending the Model for more Resilience}
\label{sec:divisionIntoGroups}

\goal{Motivate and explain our model extensions}


Our model and in-memory resilience schemes
are oblivious to the underlying hardware.
However, virtually all of today's systems have
a hierarchical hardware layout (e.g., cores
reside on a single chip, chips reside in a single node, nodes form a rack, and racks form a cabinet).
Multiple elements may be affected by a single
failure at a higher level, jeopardizing the safety
of our protocols.
We now extend our model to
cover arbitrary hierarchies
and propose \emph{topology-aware} mechanisms 
to make our schemes handle concurrent hardware failures.
Specifically, we propose three following extensions:
%


%



\goal{+ Describe failure domain hierarchies and how we model them}

\textbf{The Hierarchy of Failure Domains }
%
%
A \emph{failure domain} (FD) is an element of a hardware hierarchy that
can fail (e.g., a node or a cabinet). FDs constitute an FD hierarchy
(FDH) with $h$ levels. An example FDH is shown in
Figure~\ref{fig:failureHierarchy}, $h = 4$. We skip the level of single
cores because in practice the smallest FD is a node (e.g., in the TSUBAME2.0 system failure history, there are no core failures~\cite{tsubame2}).
Then, we define $\mathcal{H} = \bigcup_{1 \le j \le h} \left( \bigcup_{1
\le i \le H_{j}} H_{i,j} \right)$ to be the set of all the FD elements
in an FDH. $H_{i,j}$ and $H_j$ are element $i$ of hierarchy level $j$
and the number of such elements at level $j$, respectively. For example,
in Figure~\ref{fig:failureHierarchy} $H_{3,2}$ is the third blade (level 2) and $H_2 = 96$.



\begin{figure}[h!]
\centering
\includegraphics[width=0.44\textwidth]{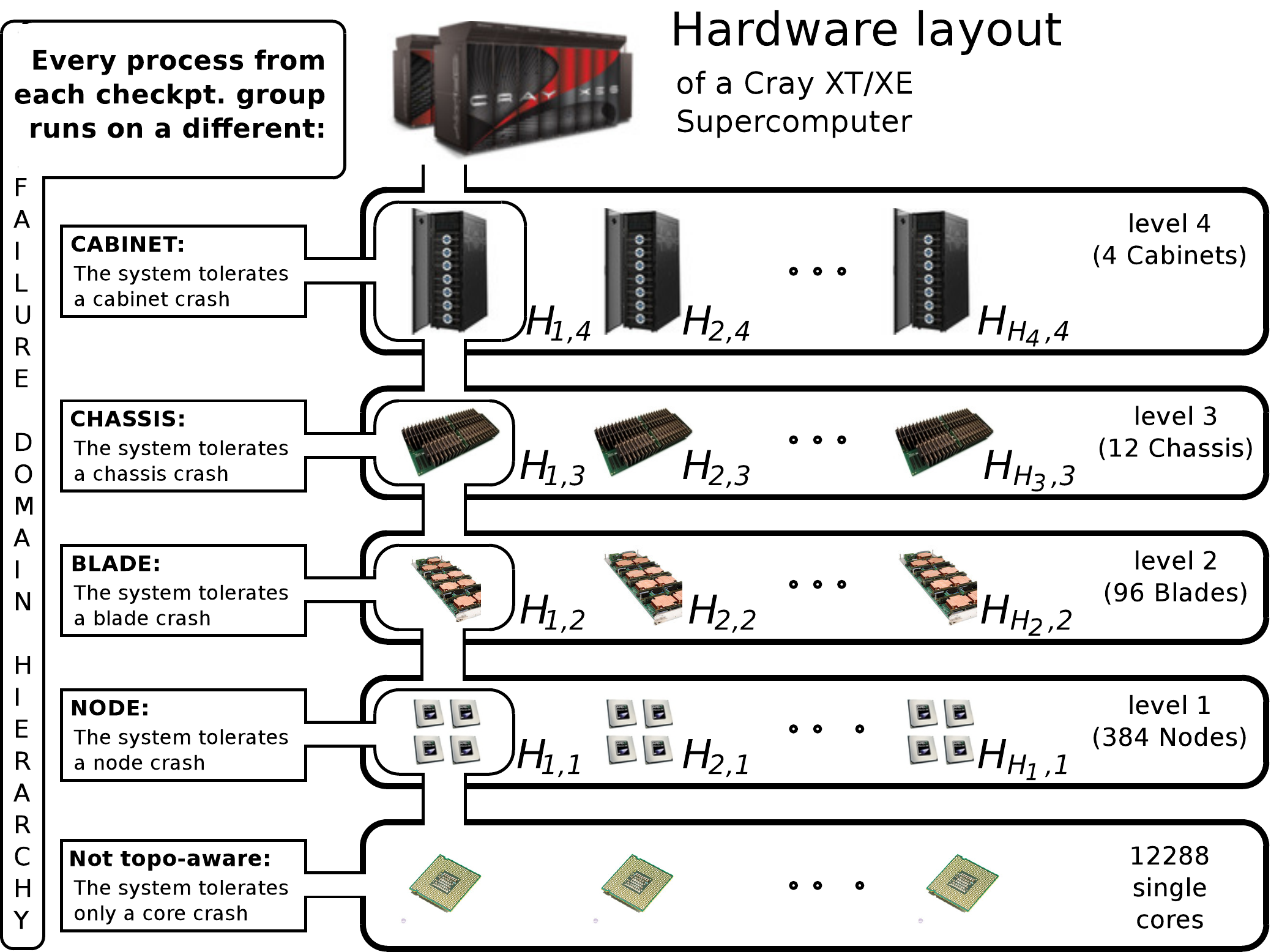}
\caption{An example hardware layout (Cray XT/XE) and the corresponding FDH (\cref{sec:divisionIntoGroups}). In this example, $h = 4$.}
\label{fig:failureHierarchy}
\end{figure}



\goal{+ Describe why and how we split processes and add checksums}

\textbf{Groups of Processes }
To improve resilience, we split the process set $\mathcal{P}$ into $g$
equally-sized groups $G_i$ and
add $m$ \emph{checksum}
processes to each group to store checksums of checkpoints taken in each
group (using, e.g., the Reed-Solomon~\cite{reed1960polynomial} coding scheme).
Thus, every group can resist $m$ concurrent process crashes.
The group size is $|G| = \frac{|\mathcal{P}|}{g} + m$.


%

\goal{+ Show and describe the extensions to the previous system definition}

\textbf{New System Definition}
We now extend the definition of a distributed system $\mathcal{D}$ to
cover the additional concepts:
%
%
\begin{alignat}{2}
\langle \mathcal{P}, \mathcal{E}, \mathcal{S}, \mathcal{H}, \mathcal{G},
\xrightarrow{po}, \xrightarrow{so}, \xrightarrow{hb}, \xrightarrow{co},
\mathscr{M}\rangle.
\end{alignat}

\noindent
$\mathcal{G} = \{G_{1}, ..., G_{g}\}$ is a set of $\mathcal{G}$roups of processes and $\mathscr{M}: \mathcal{P} \times \mathbb{N} \to \mathcal{H}$ is a function that $\mathscr{M}$aps process $p$ to the FD at hierarchy level $k$ where $p$ runs:
$\mathscr{M}(p,k) = H_{j,k}$. $\mathscr{M}$ defines how processes are distributed over FDH. For example, if $p$ runs on blade $H_{1,2}$ from Figure~\ref{fig:failureHierarchy}, then $\mathscr{M}(p,2) = H_{1,2}$.


\vspace{+1.3em}
\subsection{Handling Multiple Hardware Failures}
\label{sec:handling_multiple_hf}

\goal{+ Describe handling multiple failures with topology-awareness}


More than $m$ process crashes in any group $G_i$ result in a \emph{catastrophic failure} (CF; we use the name from \cite{Bautista-Gomez:2011:FHP:2063384.2063427}) that incurs restarting the whole computation. Depending on how $\mathscr{M}$ distributes processes,
such a CF may be caused by several (or even one) crashed FDs. To minimize the risk of CFs, $\mathscr{M}$ has to be \emph{topology-aware} (t-aware): for a given level $n$ (called a \emph{t-awareness level}), no more than $m$ processes from the same group can run on the same $H_{i,k}$ at any level $k, k\le n$:
%
%
%
%
\small
\begin{alignat}{2}
&\forall p_1,p_2,...,p_m \in \mathcal{P}\quad \forall G \in \mathcal{G}\quad \forall 1 \le k \le n:\nonumber\\
&(p_1 \in G \wedge ... \wedge p_m \in G) \Rightarrow (\mathscr{M}(p_1,k) \neq ... \neq \mathscr{M}(p_m,k))
\end{alignat}
\normalsize

\noindent
Figure~\ref{fig:distribution} shows an example t-aware process distribution.

\begin{figure}[h!]
\centering
\includegraphics[width=0.45\textwidth]{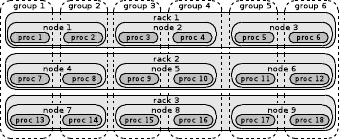}
\caption{T-aware distribution at the node \emph{and} rack level (\cref{sec:handling_multiple_hf}).}
\label{fig:distribution}
\end{figure}

\subsection{Calculating Probability of a CF}
\label{sec:probabilityStudy}

\goal{+ Describe goals and assumptions in calculating CF probability}


We now calculate
the probability of a catastrophic failure ($P_{cf}$) in our model. We
later (\cref{sec:reliabilityStudy}) use $P_{cf}$ to show that our
protocols are resilient on a concrete machine (the TSUMABE2.0
supercomputer~\cite{tsubame2}). If a reader is not interested in the
derivation details, she may proceed to
Section~\ref{sec:libImplementation} where we present the results. We set $m = 1$ and thus use the XOR erasure code, similar to an additional disk
in a RAID5~\cite{Chen:1994:RHR:176979.176981}. 
We assume that failures at different hierarchy levels are independent
and that any number $x_j$ of elements from any hierarchy level $j$ ($1
\le x_j \le H_j$, $1 \le j \le h$) can fail. Thus, 
%
\scriptsize
\begin{alignat}{2}
P_{cf} &= \sum_{j=1}^{h} \sum_{x_j=1}^{H_j} P(x_j \cap x_{j,cf}) =
  \sum_{j=1}^{h} \sum_{x_j=1}^{H_j} P_{j}(x_j) P_{j}(x_{j,cf} | x_j).
\end{alignat}
\normalsize


\goal{+ Describe the first, general formula (7)}

$P(x_j \cap x_{j,cf})$ is the probability that $x_j$ elements of the $j$ hierarchy level will fail \emph{and} result in a catastrophic failure. $P_{j}(x_j)$ is the probability of the failure of $x_j$ elements from level $j$ of the hierarchy. $P_{j}(x_{j,cf} | x_j)$ is the probability that $x_j$ given concurrent failures at hierarchy level $j$ are catastrophic to the system. It is difficult to analytically derive $P_{j}(x_j)$ as it is specific for every machine. For our example study (see Section~\ref{sec:reliabilityStudy}) we use the failure rates from the TSUBAME2 failure history \cite{tsubame2}.

\goal{+ Describe the conditional formula (8) and show the final Equation (9)}

In contrast, $P_{j}(x_{j,cf} | x_j)$ can be calculated using combinatorial theory. Assume that $\mathscr{M}$ distributes processes in a t-aware way at levels $1$ to $n$ of the FDH ($1 \le n \le h$). First, we derive $P_{j}(x_{j,cf} | x_j)$ for any level $j$ such that $1 \le j \le n$:



\scriptsize
\begin{alignat}{2}
P_{j}(x_{j,cf} | x_j) &= \frac{D_j \cdot \binom{|G|}{2}\cdot
\binom{H_j-2}{x_j-2} }{ \binom{H_j}{x_j} }.
\end{alignat}
\normalsize

\noindent
$\binom{|G|}{2}$ is the number of the possible catastrophic failure
scenarios \emph{in a single group} ($m=1$ thus any two process crashes in
one group are catastrophic). $D_j$ is the number of such single-group
scenarios \emph{at the whole level $j$} and is equal to $\left\lceil
\frac{H_j}{|G|} \right\rceil$ (see Figure~\ref{fig:dist_expl} for intuitive
explanation). $\binom{H_j-2}{x_j-2}$ is the number of the remaining
possible failure scenarios and $\binom{H_j}{x_j}$ is the total number of
the possible failure scenarios. Second, for remaining levels $j$ ($n+1
\le j \le h$) $\mathscr{M}$ is \emph{not} t-aware and thus in the
worst-case scenario any element crash is catastrophic: $P_{j}(x_{j,cf} |
x_j) = 1$. The final formula for $P_{cf}$ is thus


\scriptsize
\begin{alignat}{2}
P_{cf} &= \sum_{j=1}^{n} \sum_{x_j=1}^{H_j} P_{j}(x_j) \frac{D \cdot
\binom{|G|}{2} \cdot \binom{H_j-2}{x_j-2} }{ \binom{H_j}{x_j} }
+\sum_{j=n+1}^{h} \sum_{x_j=1}^{N_j} P_{j}(x_j).
\end{alignat}
\normalsize

\begin{figure}[h!]
\centering
\includegraphics[scale=1.7]{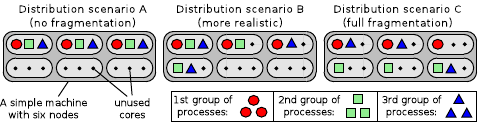}
\caption{(\cref{sec:probabilityStudy}) Consider three process distribution scenarios by $\mathscr{M}$ (\emph{each} is t-aware). Optimistically, processes can be distributed contiguously (scenario A) or partially fragmented (scenario B). To get the upper bound for $P_{cf}$ we use the worst-case pattern (scenario C). Now, to get the number of single-group CF scenarios at the whole level $j$ ($D_j$), we need to obtain the number of the groups of \emph{hardware elements} at $j$ that
hold process groups: $\lceil H_j/|G| \rceil$.}
%
\label{fig:dist_expl}
\end{figure}

\section{Holistic Resilience Protocol}
\label{sec:libImplementation}

\goal{Describe the goal and the implementation details of the protocol}

We now describe an example conceptual implementation of holistic fault tolerance for RMA
that we developed to understand the tradeoffs between the resilience and performance in RMA-based systems. 
We implement it as a portable library (based on C
and MPI) called \flib{}. We utilize MPI-3's one sided
interface, but any other RMA model enabling relaxed memory
consistency could be used instead (e.g., UPC or Fortran 2008). We use
the publicly available \fompi{} implementation of MPI-3 one sided as MPI
library~\cite{fompi} but any other MPI-3 compliant
library would be suitable. 
For simplicity we assume that the user application uses one
contiguous region of shared memory of the same size at each process.
Still, all the conclusions drawn are valid for any other
application pattern based on RMA. Following the MPI-3 specification, we call this shared region of memory at every process a $window$.
%
Finally, we divide user processes (referred to as CoMputing processes, $CMs$) into groups (as described in Section~\ref{sec:divisionIntoGroups})
and add one CHecksum process (denoted as $CH$) per group ($m=1$). For any computing process $p$, we denote the $CH$ in its group as $CH(p)$. $CHs$ store and update XOR checksums of their $CMs$.

\subsection{Protocol Overview}
\label{sec:prot_over_small}

\goal{+ Describe protocol layers and modules}

In this section we provide a general overview of the layered protocol implementation
(see Figure~\ref{fig:generalOverview}). The first part (layer 1) logs accesses.
The second layer takes uncoordinated checkpoints (called \emph{demand} checkpoints) to trim the logs.
Layer 3 performs regular coordinated checkpoints. All layers are diskless.
Causal recovery replays memory accesses. Finally, our FDH increases resilience of the whole protocol.
%
%

\begin{figure}[h!]
\centering
\includegraphics[width=0.4\textwidth]{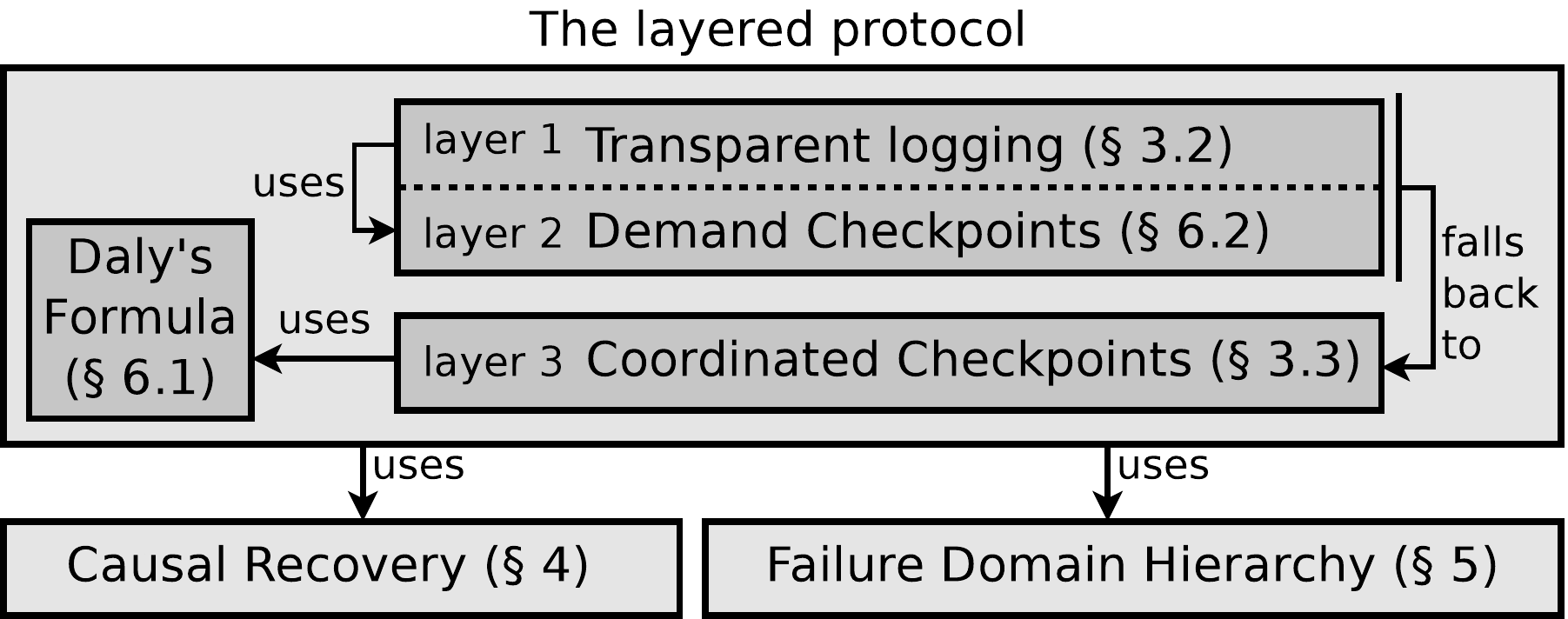}
\caption{The overview of the protocol (\cref{sec:prot_over_small}). Layer 1 and 2 constitute the
 uncoordinated part of the protocol that falls back to the
 coordinated checkpointing if logging fails or if its overhead is too high.}
\label{fig:generalOverview}
\end{figure}

\begin{figure*}
\centering
\vspace{-1.2em}
 \subfloat[Distribution of node crashes (samples and the fit) (\cref{sec:reliabilityStudy}).]{
  \includegraphics[width=0.23\textwidth]{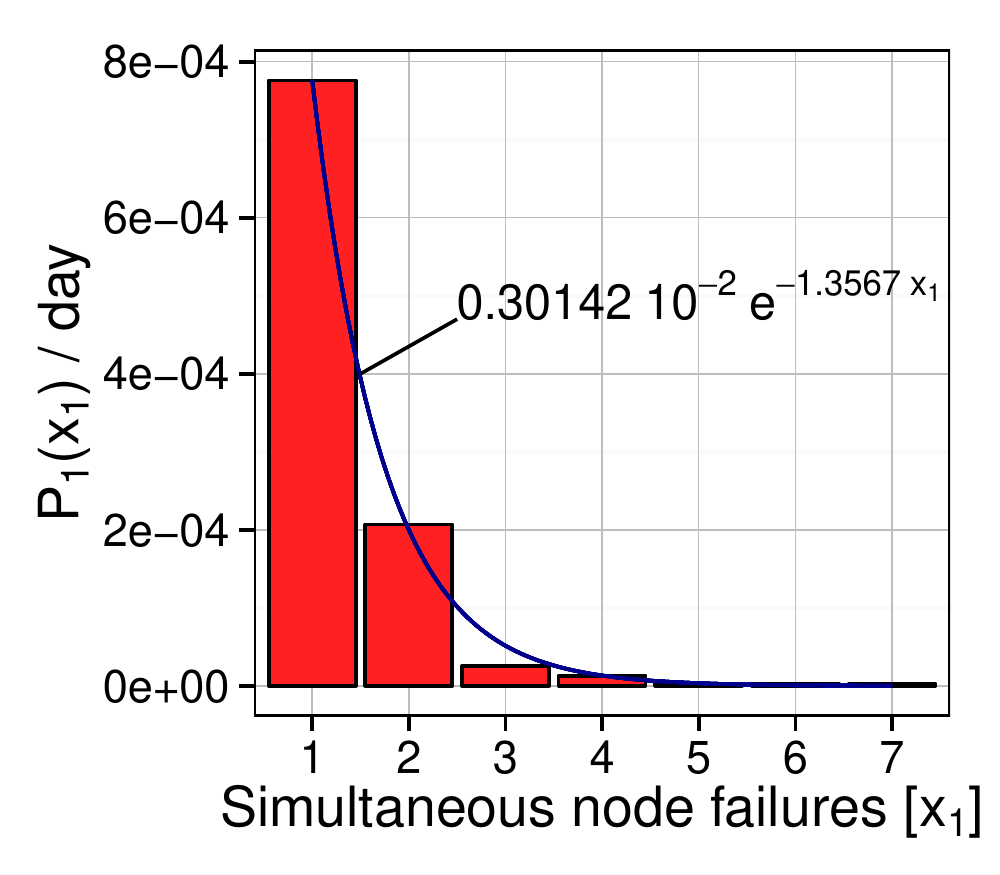}
  \label{fig:pdfRacks}
 }\hfill
 \subfloat[Distribution of PSU crashes (samples and the fit) (\cref{sec:reliabilityStudy}).]{
  \includegraphics[width=0.23\textwidth]{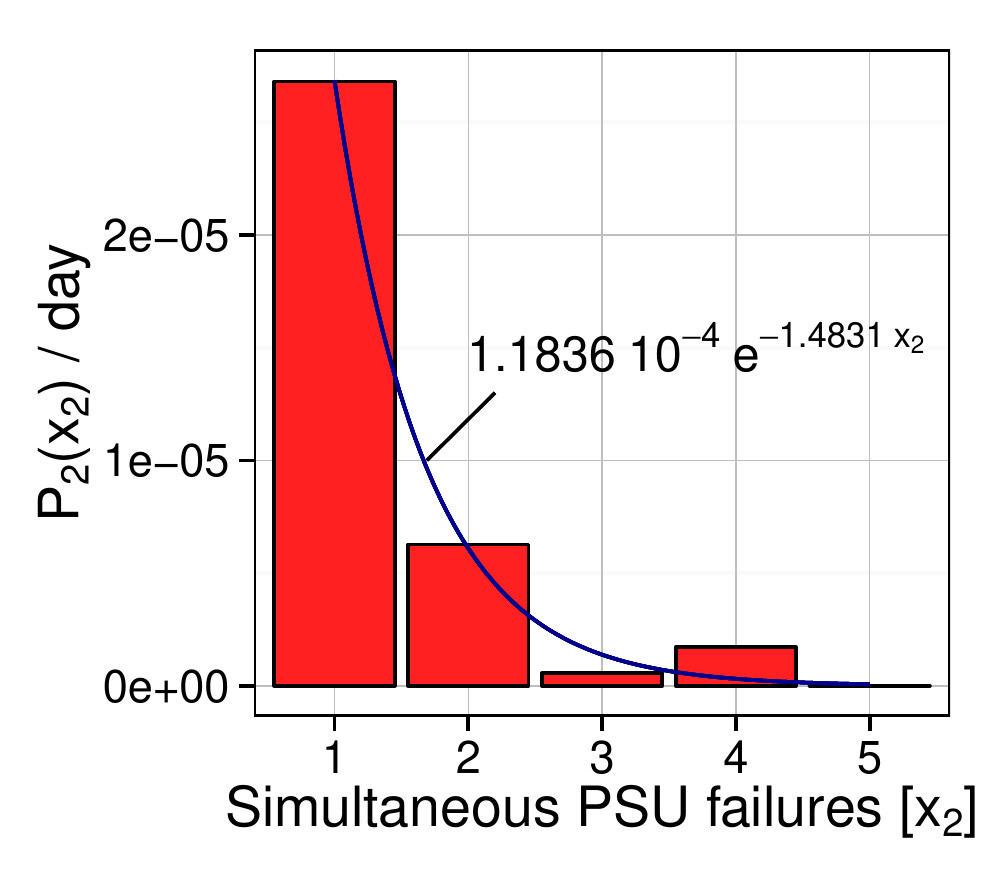}
  \label{fig:pdfNodes}
 }\hfill
 \subfloat[Probability of a catastrophic failure (\cref{sec:compar_resil}).]{
  \includegraphics[width=0.23\textwidth]{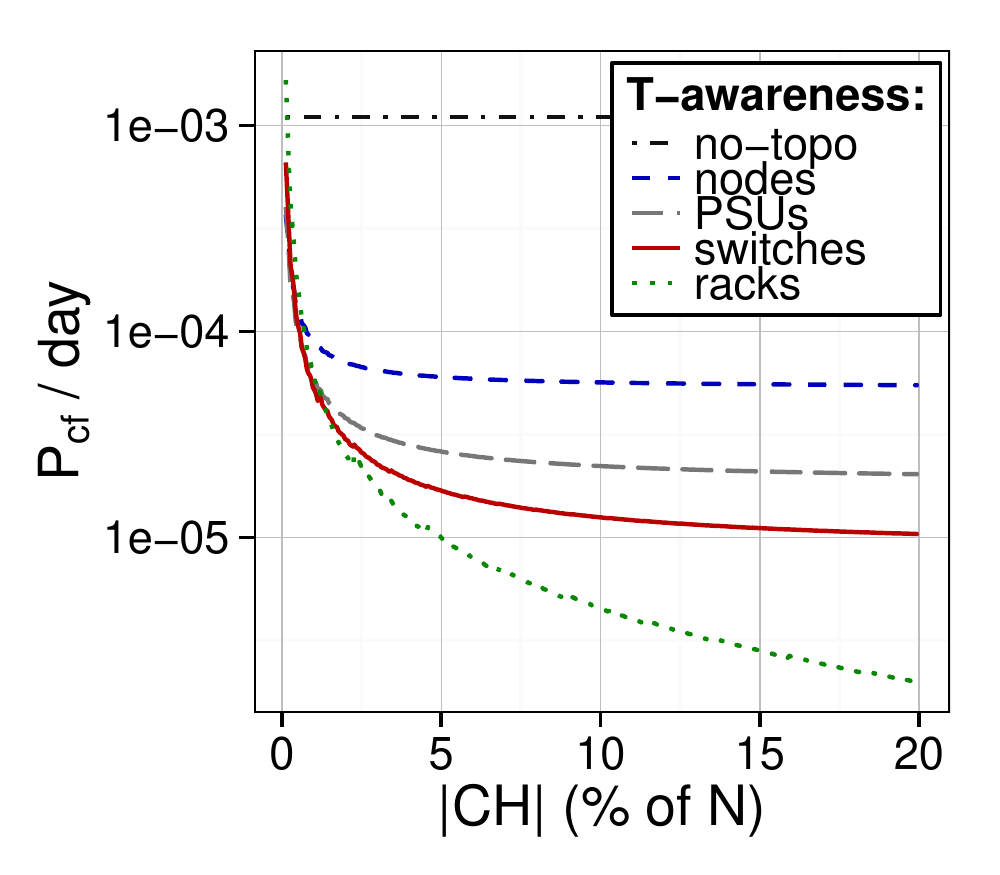}
  \label{fig:probabilityPict}
 }\hfill
  \subfloat[NAS FFT (class C) fault-free runs: checkpointing (\cref{sec:nas_eval}).]{
  \includegraphics[width=0.23\textwidth]{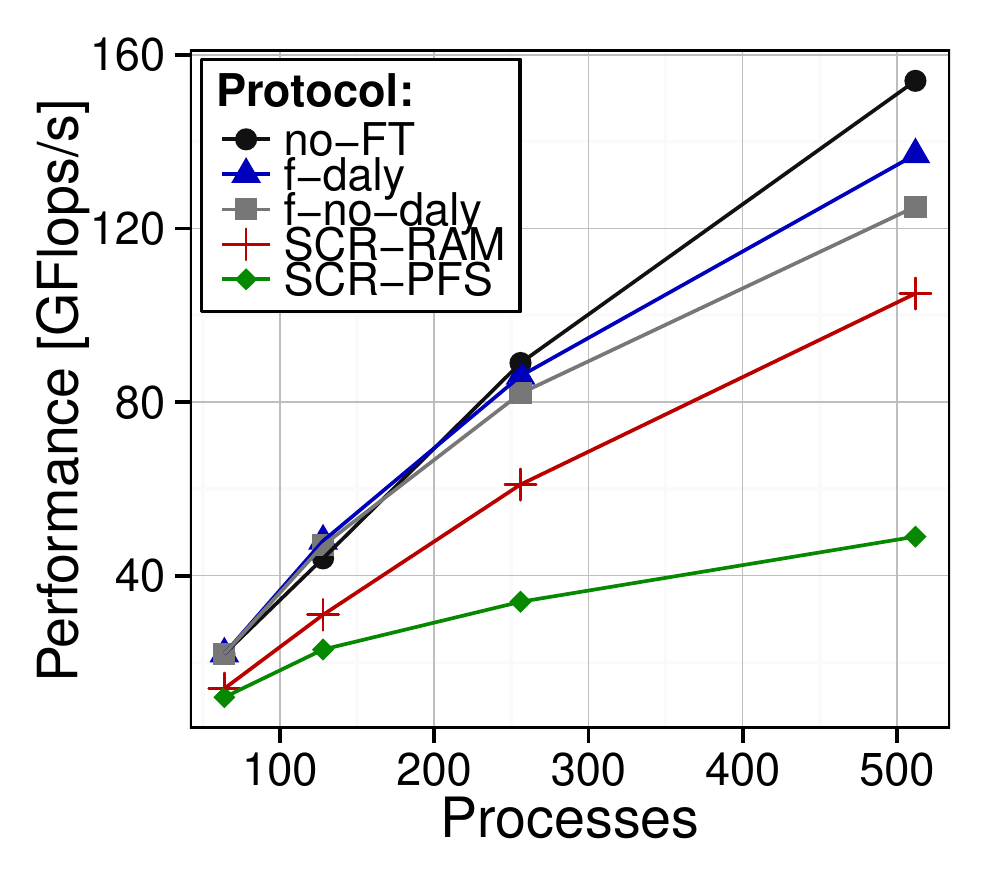}
  \label{fig:fftFFrun}
 }\hfill
\caption{Distribution of PSU \& node failures, $P_{cf}$ in TSUBAME2.0
running 4,000 processes, and the performance of NAS 3D FFT.}
\label{fig:probability}
\vspace{-1.0em}
\end{figure*}

\goal{+ Say how and why we use Daly's formula}

\textbf{Daly's Interval }
Layer 3 uses Daly's formula~\cite{Daly:2006:HOE:1134241.1134248} as the optimum interval between coordinated checkpoints: \small$\sqrt{2 \delta M} \cdot [1 + 1/3 \sqrt{\delta/(2 M)} + (1/9) (\delta/(2 M))] - \delta$ \normalsize(for $\delta < 2 M$), or $M$ (for $\delta \ge 2 M$). $M$ is the MTBF (mean time between failures that \flib{} handles with coordinated checkpointing) for the target machine and $\delta$ is the time to take a checkpoint. The user provides $M$ while $\delta$ is estimated by our protocol.

%


\goal{+ Say how we interface with user/runtime}

\textbf{Interfacing with User Programs and Runtime }
%
%
\flib{} routines are called after each RMA action. 
This would entail runtime system calls in compiled
languages and we use the PMPI profiling interface~\cite{mpi3} in our
implementation.  
During window creation
the user can specify: (1) the number of $CHs$,
(2) MTBF, (3) whether to use topology-awareness. 
After window creation, the protocol divides processes into $CMs$ and
$CHs$. If the user enables t-awareness, groups of processes running on the same
FDs are also created. In the current version \flib{} takes into account computing nodes
when applying t-awareness.

\subsection{Demand Checkpointing}

\goal{+ Motivate and describe coordinated checkpoints}

\emph{Demand checkpoints} address the problem of diminishing amounts of
memory per core in today's and future computing centers.
%
%
If free memory at $CM$ process $p$ is scarce, $p$ selects the process $q$ with
the largest $LP_p[q]$ or $LG_p[q]$ and requests a demand checkpoint.\htor{? cUsing
this data iheckpoint}
First, $p$ sends a \emph{checkpoint request} to $CH(q)$ which, in turn, forces $q$ to
checkpoint. This can be done by: closing all the epochs, locking all the relevant data
structures, calculating the XOR checksum, and: (1) streaming the result
to $CH(q)$ piece by piece or (2) sending the
result in one bulk. $CH(q)$ integrates the received checkpoint data into the existing XOR checksum. Variant (1) is memory-efficient, and (2) is less
time-consuming. Next, $q$ unlocks all the data
structures. Finally, $CH(q)$ sends a confirmation with the epoch number $E$\ptoq\ and respective counters ($GNC_q$, $GC_q$, $SC_q$) to $p$.
Process $p$ can delete logs of actions $a$ where $a.EC < E$\ptoq\@, $a.GNC < GNC_q$, $a.GC < GC_q$, $a.SC < SC_q$.
%

%

\section{Testing and Evaluation}
\label{sec:testing}

\goal{Summarize the section, explain the notation used}

In this section we first analyze the resilience of our protocol using real data from TSUBAME2.0~\cite{tsubame2} failure history. Then, we test the performance of \flib{} with a NAS benchmark~\cite{Bailey91thenas} that computes 3D Fast Fourier Transformation and a distributed key-value store. 
We denote the number of $CHs$ and $CMs$ as $|CH|$ and $|CM|$, respectively.

\subsection{Analysis of Protocol Resilience}
\label{sec:reliabilityStudy}

\goal{+ Explain why we do resilience analysis}




Our protocol stores all data in volatile memories to avoid I/O performance penalties
and frequent disk and parallel file system failures~\cite{Sato:2012:DMN:2388996.2389022,disk_fails}.
This brings several questions on whether the scheme is resilient in practical environments.
To answer this question, we calculate the probability of a catastrophic failure
$P_{cf}$ (using Equations~(7) and~(9)) of our protocol, applying t-awareness at different levels of FDH.

\goal{+ Describe how we use our model with data from TSUBAME}

We first fix model parameters ($H_j$, $h$) to reflect the hierarchy of TSUBAME2.0.
TSUBAME2.0 FDH has 4 levels~\cite{Sato:2012:DMN:2388996.2389022}: nodes,
power supply units (PSUs), edge switches, and racks ($h=4$)~\cite{Sato:2012:DMN:2388996.2389022}.
Then, to get $P_{cf}$, we calculate distributions $P_j(x_j)$ that determine the probability
of $x_j$ concurrent crashes at level $j$ of the TSUBAME FDH.
To obtain $P_j(x_j)$ we analyzed 1962
crashes in the history of
TSUBAME2.0 failures~\cite{tsubame2}. Following
\cite{Bautista-Gomez:2011:FHP:2063384.2063427} we decided to use
exponential probability distributions, where the argument is the number
of concurrent failures $x_j$. 
We derived four probability density functions (PDFs) that approximate the failure distributions of
nodes (\small$0.30142 \cdot 10^{-2} e^{-1.3567 x_1}$\normalsize),
PSUs (\small$1.1836 \cdot 10^{-4} e^{-1.4831 x_2}$\normalsize),
switches (\small$3.9249 \cdot 10^{-5} e^{-1.5902 x_3}$\normalsize),
and racks (\small$3.2257 \cdot 10^{-5} e^{-1.5488 x_4}$\normalsize).
The unit is failures per day. 
Figures~\ref{fig:pdfRacks} and~\ref{fig:pdfNodes}
illustrate two PDF plots with histograms.
The
distributions for PSUs, switches, and racks are based on real data
only. For nodes it was not always possible to determine the exact
correlation of failures. Thus, we pessimistically assumed
(basing on~\cite{Bautista-Gomez:2011:FHP:2063384.2063427}) that single
crashes constitute 75\% of all node failures, two
concurrent crashes constitute 20\%, and other values decrease
exponentially.

\subsubsection{Comparison of Resilience}
\label{sec:compar_resil}

\goal{+ Describe the results of the resilience analysis}

Figure~\ref{fig:probabilityPict} shows the resilience of our protocol when using
five t-awareness strategies. The number of processes $N$ is 4,000.
$P_{cf}$ is normalized to one day period. Without t-awareness (\smalltt{no-topo})
a single crash of any FD of TSUBAME2.0 is catastrophic, thus $P_{cf}$ does
not depend on $|CH|$.
In other scenarios every process from
every group runs on a different node (\smalltt{nodes}), PSU ({\smalltt{PSUs}}),
switch enclosure ({\smalltt{switches}}) and rack (\smalltt{racks}).
In all cases $P_{cf}$ decreases proportionally to the increasing $|CH|$, however at some point the
exponential distributions ($P_j(x_j)$) begin to dominate the results.
Topology-awareness at higher hierarchy levels
significantly improves the resilience of our protocol.
For example, if $CH = 5\% N$, $P_{cf}$ in the \smalltt{switches}
scenario is $\approx$4 times lower than in \smalltt{nodes}.
Furthermore, all t-aware schemes are 1-3 orders of
magnitude more resilient than \smalltt{no-topo}.

\goal{+ State our schemes are safe and we don't want I/O and disks}

The results show that even a simple scheme (\smalltt{nodes}) significantly
improves the resilience of our protocol that performs
only in-memory checkpointing and logging. We conclude that
costly I/O flushes to the parallel file system (PFS) are not required for
obtaining a high level of resilience. 
On the contrary, such flushes may
even \emph{increase} the risk of failures. They usually entail stressing the I/O system for significant amounts of time
\cite{Sato:2012:DMN:2388996.2389022}, and stable storage is often the
element most susceptible to crashes. For example, a Blue Gene/P
supercomputer had 4,164 disk fail events in 2011 (for 10,400 total
disks)~\cite{disk_fails}, and its PFS failed 77 times, almost two times
more often than other hardware~\cite{disk_fails}.

\begin{figure*}
\centering
  \subfloat[NAS FFT (class A) fault-free runs: demand checkpointing.]{
  \includegraphics[width=0.30\textwidth]{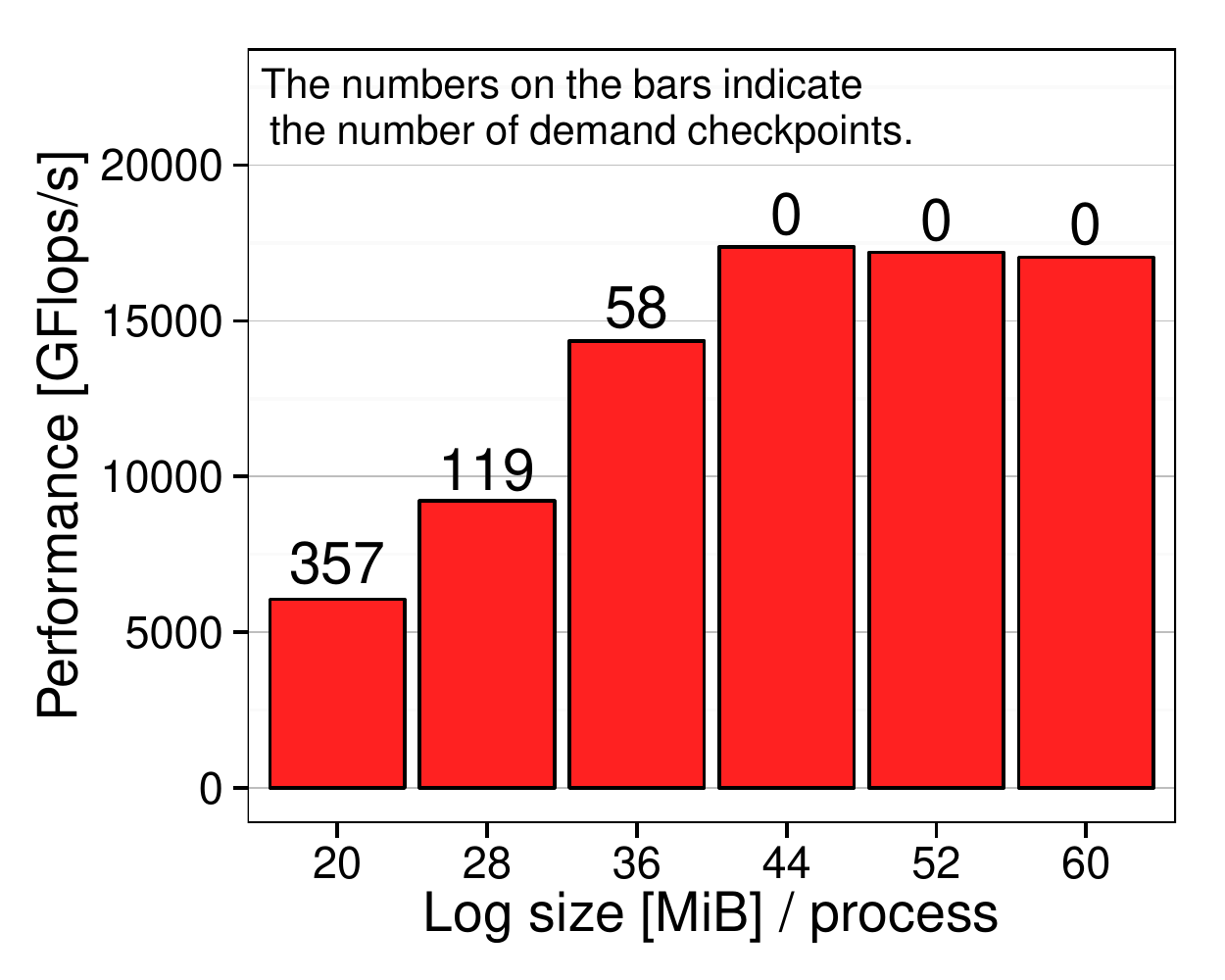}
  \label{fig:dchecks}
 }\hfill
 \subfloat[NAS FFT (class A) fault-free runs: logging.]{
  \includegraphics[width=0.3\textwidth]{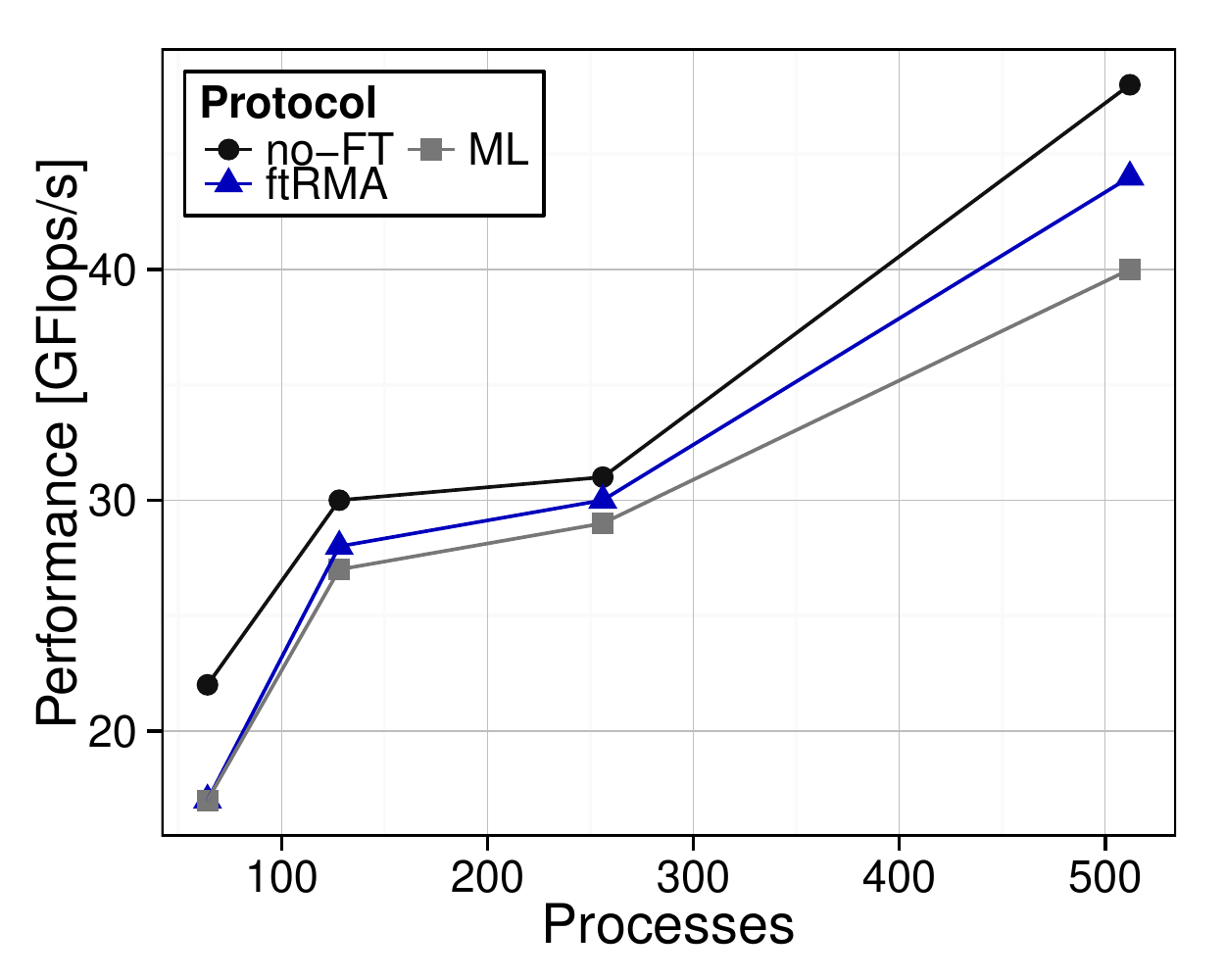}
  \label{fig:nas-logs}
 }\hfill
 \subfloat[Key-value store fault-free runs.]{
  \includegraphics[width=0.3\textwidth]{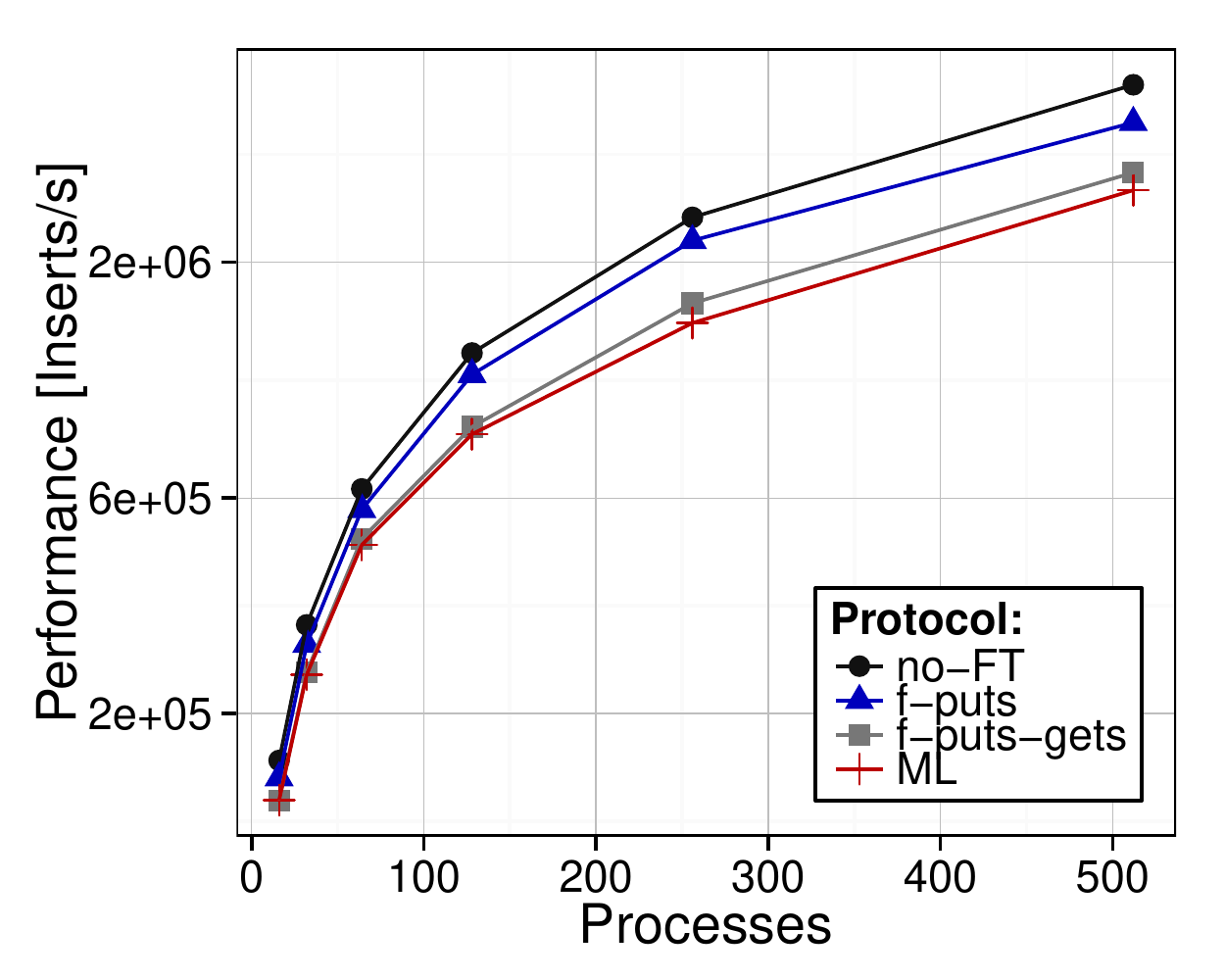}
  \label{fig:hashFFrun}
 }
\caption{Performance of the NAS FFT code (\cref{sec:nas_eval}) and the key-value store (\cref{sec:kv_eval}).}
\label{fig:res-fft-kv}
\end{figure*}

\subsection{Analysis of Protocol Performance}


\goal{+ Describe the evaluation section, state we're doing worst-case benchmarks}


We now discuss the performance of our fault tolerance protocol after the
integration with two applications: NAS 3D FFT and a distributed
key-value store. Both of these
applications are characterized by intensive communication patterns,
thus they demonstrate worst-case scenarios for our protocol. Integrating \flib{}
with the application code was trivial and required minimal code changes
resulting in the same code complexity.

\goal{+ Describe SCR and how we configure it}

\textbf{Comparison to Scalable Checkpoint/Restart }
We compare \flib{} to Scalable Checkpoint-Restart (SCR)~\cite{scr}, a popular open-source message passing library that provides
checkpoint and restart capability for MPI codes but does not enable logging.
We turn on the XOR scheme in SCR
and we fix the size of SCR groups~\cite{scr} so that they match the analogous parameter
in \flib{} ($|G|$).
To make the comparison fair,
we configure SCR to save checkpoints to both in-memory tmpfs (\smalltt{SCR-RAM})
and to the PFS
(\smalltt{SCR-PFS}). 

\goal{+ Describe a simple ML protocol that we compare to }

\textbf{Comparison to Message Logging }
To compare the logging overheads in MP and RMA we also developed a simple message logging (ML) scheme (basing on the protocol from~\cite{Riesen:2012:ASI:2388996.2389021}) that
records accesses.
Similarly to~\cite{Riesen:2012:ASI:2388996.2389021}
we use additional processes to store protocol-specific access logs;
the data is stored at the sender's or receiver's side
depending on the type of operation.


\goal{+ Describe hardware we used for benchmarks}

We execute all benchmarks on the
Monte Rosa system and we use Cray XE6 computing
nodes. Each node contains four 8-core 2.3~GHz AMD Opterons 6276
(Interlagos) and is connected to a 3D-Torus
Gemini network. We use the Cray Programming Environment 4.1.46 to
compile the code.

\subsubsection{NAS 3D Fast Fourier Transformation}
\label{sec:nas_eval}

\goal{+ Describe NAS 3D FFT}

Our version of the NAS 3D FFT~\cite{Bailey91thenas} benchmark is based on MPI-3
nonblocking \textsc{put}s (we exploit the overlap of computation and
communication). The benchmark calculates 3D FFT using a 2D decomposition.


\goal{+ Describe the performance of fault-free runs when checkpointing}

\textbf{Performance of Coordinated Checkpointing }
We begin with evaluating our checkpointing ``Gsync'' scheme.
Figure~\ref{fig:fftFFrun} illustrates the performance of NAS FFT fault-free
runs. We compare: the original application code without any fault-tolerance
(\smalltt{no-FT}), \flib{}, \smalltt{SCR-RAM}, and \smalltt{SCR-PFS}. We fix $|CH| = 12.5\% |CM|$.
We include two \flib{} scenarios: \smalltt{f-daly} (use Daly's formula for coordinated checkpoints), and
\smalltt{f-no-daly} (fixed frequency of checkpoints without Daly's formula, $\approx$2.7s for 1024 processes).
We use the same t-awareness policy in all codes (\smalltt{nodes}).
The tested schemes have the respective fault-tolerance
overheads over the baseline \smalltt{no-FT}: 1-5\% (\smalltt{f-daly}), 1-15\% (\smalltt{f-no-daly}), 21-37\% (\smalltt{SCR-RAM}) and
46-67\% (\smalltt{SCR-PFS}). The performance
of SCR-RAM is lower than \smalltt{f-daly} and \smalltt{f-no-daly} because \flib{} is based on the Gsync scheme that incurs less synchronization. \smalltt{SCR-PFS} entails the highest overheads
due to costly I/O flushes.

\textbf{Performance of Demand Checkpointing }
%
We now analyze how the size of the log
impacts the number of demand checkpoints and the performance of fault-free runs (see Figure~\ref{fig:dchecks}).
Dedicating less than 44 MiB of memory for storing logs (per process) triggers demand checkpoint requests to
clear the log. This results in performance penalties but leaves more memory available to the user.

Second, we illustrate how $|CH|$ impacts the performance of recovering a process from its last demand
checkpoint (see Figure~\ref{fig:nas-rec}).
We run the NAS benchmark 10 times and after every such iteration we communicate the checksum
necessary to recover the process. We use the \smalltt{nodes} t-awareness and compare 
\smalltt{no-FT}, \smalltt{f-12.5-nodes} ($|CH|=12.5\% |CM|$), and \smalltt{f-6.25-nodes} ($|CH|=6.25\% |CM|$).
RMA's direct memory accesses ensure transparency and 
relatively small overheads: when $|CH|=12.5\% |CM|$
10 checksum transfers during 10 iterations make the run only 60\% slower than \smalltt{no-FT}. 
%

\begin{figure}[h!] \centering
\includegraphics[width=0.3\textwidth]{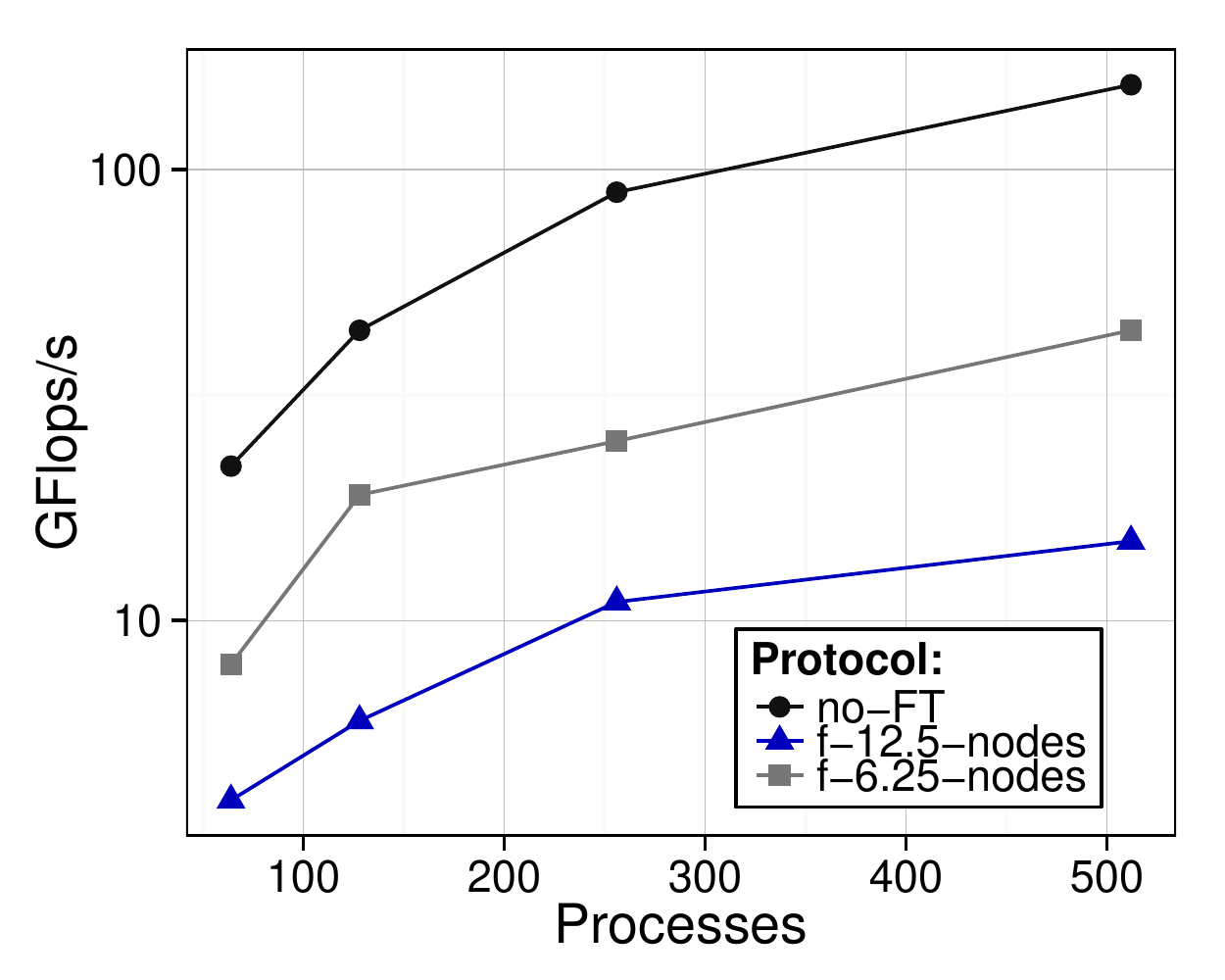}
\vspace{-0.5em}
\caption{NAS FFT (class C) recovery from a demand checkpoint. (\cref{sec:nas_eval}).} 
\label{fig:nas-rec}
\end{figure}

%
%

\goal{+ Describe the performance of fault-free runs when logging}

\textbf{Performance of Access Logging }
As the next step we evaluate our logging scheme.
Figure~\ref{fig:nas-logs} illustrates the performance of fault-free
runs. We compare
\smalltt{no-FT}, \flib{}, and our ML protocol (\smalltt{ML}).
\flib{} adds only $\approx$8-9\% of overhead to the baseline (\smalltt{no-FT})
and  consistently outperforms \smalltt{ML} by $\approx$9\%
due to the smaller amount of protocol-specific interaction between processes.

\goal{+ Describe how topo-awareness and |CH| influences results}

\textbf{Varying |CH| and T-Awareness Policies }
Here, we analyze how $|CH|$ and t-awareness impact the performance of NAS FFT fault-free runs. We set $|CH|=12.5\% |CM|$ and $|CH|=6.25\% |CM|$, and we use the \smalltt{no-topo} and \smalltt{nodes} t-awareness policies. The results show that all these schemes differ negligibly from \smalltt{no-FT} by 1-5\%.
%

%


%


 \begin{figure*}[ht]
\centering
\scalebox{0.55}{\input{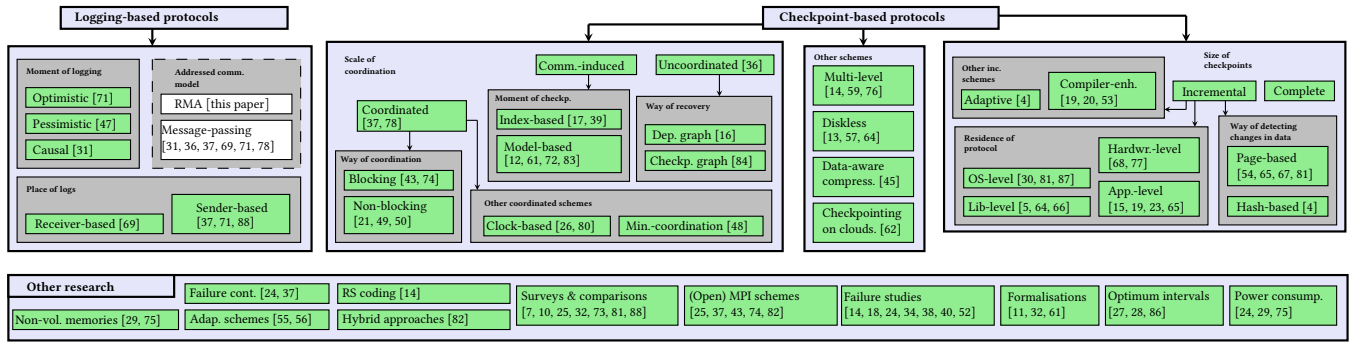}}
\caption{An overview of existing checkpointing and logging schemes (\cref{sec:relatedWork}). A dashed rectangle illustrates a new sub-hierarchy introduced in the paper: dividing the logging protocols with respect to the \emph{communication model} that they address.}
\label{fig:overview}
\end{figure*}

\subsubsection{Key-Value Store}
\label{sec:kv_eval}

\goal{+ Describe our DHT}

Our key-value store is based on a simple distributed hashtable (DHT)
that stores 8--Byte integers.
The DHT consists of parts called \emph{local volumes} constructed with
fixed-sized arrays. Every local volume is managed
by a different process. Inserts are
based on MPI-3 atomic Compare-And-Swap and Fetch-And-Op functions.
Elements after hash collisions are inserted in the overflow heap that is the part of each local volume.
To insert an element, a thread atomically updates the pointers to the next free cell and the last element
in the local volume.
Memory consistency is ensured with flushes. One \textsc{get} and one \textsc{put} are logged if there is no hash collision, otherwise 6 \textsc{put}s and 4 \textsc{get}s are recorded.


\goal{+ Describe the DHT fault-free-runs benchmark}

\textbf{Performance of Access Logging }
We now measure the relative performance
penalty of logging \textsc{put}s and \textsc{get}s. During the
benchmark, processes insert random elements with random keys. We focus
on inserts only as they are perfectly representative for the logging
evaluation. To simulate realistic requests, every process waits
for a random time after every insert. The function that we use to
calculate this interval is based on the exponential probability
distribution: $f \delta e^{-\delta x}$, where $f$ is a scaling factor,
$\delta$ is a rate parameter and $x \in [0;b)$ is a random number.
The selected parameter values ensure that
  every process spends
  $\approx$5-10\% of the total runtime on inserting elements. For many
  computation-intense applications this is already a high amount of
  communication. We again compare
  \smalltt{no-FT}, \smalltt{ML}, and two \flib{} scenarios: \smalltt{f-puts} (logging
  only \textsc{put}s) and \smalltt{f-puts-gets} (logging \textsc{put}s
  and \textsc{get}s). We fix $|CH| = 12.5\% |CM|$ and use the
  \smalltt{nodes} t-awareness. We skip SCR as it
  does not enable logging.


\goal{+ Describe the benchmark results}

We present the results in Figure~\ref{fig:hashFFrun}. For $N=256$, the logging overhead over the baseline (\smalltt{no-FT}) is: $\approx$12\% (\smalltt{f-puts}), 33\% (\smalltt{f-gets}), and 40\% (\smalltt{ML}). The overhead of logging \textsc{put}s in is due to the fact that every operation is recorded directly after issuing. Traditional message passing protocols suffer from a similar effect~\cite{Elnozahy:2002:SRP:568522.568525}. The overhead generated by logging \textsc{get}s in \smalltt{f-puts-gets} and \smalltt{ML} is more significant because, due to RMA's one-sided semantics, every \textsc{get} has to be recorded {remotely}. 
In addition, \smalltt{f-puts-gets} suffers from synchronization overheads (caused by concurrent accesses to $LG$), while \smalltt{ML} from inter-process protocol-specific communication.
Discussed overheads heavily depend on the application type. Our key-value store constitutes a worst-case scenario
because it does not allow for long epochs that could enable, e.g., sending the logs of multiple \textsc{get}s in a bulk. 
The performance penalties would be smaller in applications that
overlap computation with communication and use non blocking \textsc{get}s.


\section{Related Work}
\label{sec:relatedWork}

\goal{Introduce the section and imply there's no research on resilience in RMA}

In this section we discuss existing checkpointing and
logging schemes (see Figure~\ref{fig:overview}). For excellent surveys, see
\cite{Elnozahy:2002:SRP:568522.568525,
Alvisi:1998:MLP:630821.631222,
vasavada2011comparing}. Existing work on fault tolerance in RMA/PGAS is scarce,
an example scheme that uses PGAS for data replication can be found in~\cite{5738978}.



\subsection{Checkpointing Protocols}


\goal{+ Say how we divide checkpointing schemes}

These schemes are traditionally divided into \emph{uncoordinated}, \emph{coordinated}, and \emph{communication induced}, depending on process coordination scale~\cite{Elnozahy:2002:SRP:568522.568525}. There are also \emph{complete} and \emph{incremental} protocols that differ in checkpoint sizes~\cite{vasavada2011comparing}.


\textbf{Uncoordinated Schemes }
Uncoordinated schemes do not synchronize while checkpointing, but may suffer from \emph{domino effect} or complex recoveries~\cite{Elnozahy:2002:SRP:568522.568525}. Example protocols are based on \emph{dependency}~\cite{Bhargava25775} or \emph{checkpoint graphs}~\cite{Elnozahy:2002:SRP:568522.568525}. A recent scheme targeting large-scale systems is Ken~\cite{Yoo:2012:CRA:2342821.2342824}.


\textbf{Coordinated Schemes }
Here, processes synchronize to produce consistent global checkpoints. There is no domino effect and recovery is simple but synchronization may incur severe overheads. Coordinated schemes can be \emph{blocking}~\cite{Elnozahy:2002:SRP:568522.568525} or \emph{non-blocking}~\cite{Chandy:1985:DSD:214451.214456}. There are also schemes based on \emph{loosely synchronized clocks}~\cite{Tong:1992:RRD:628900.629082} and \emph{minimal coordination}~\cite{1702129}.


\textbf{Communication Induced Schemes }
Here, senders add scheme-specific data to application messages that receivers use to, e.g., avoid taking useless checkpoints. These schemes can be \emph{index-based}~\cite{632814} or \emph{model-based}~\cite{Elnozahy:2002:SRP:568522.568525,342127}.


\textbf{Incremental Checkpointing }
An incremental checkpoint updates only the data that changed since the previous checkpoint. These protocols are divided into page-based~\cite{vasavada2011comparing} and hash-based~\cite{Agarwal:2004:AIC:1006209.1006248}. They can reside at the level of an \emph{application}, a \emph{library}, an \emph{OS}, or \emph{hardware}~\cite{vasavada2011comparing}. Other schemes can be \emph{compiler-enhanced}~\cite{Bronevetsky:2008:CIC:1345206.1345253} or \emph{adaptive}~\cite{Agarwal:2004:AIC:1006209.1006248}.


\textbf{Others }
Recently, \emph{multi-level} checkpointing was introduced~\cite{Moody:2010:DME:1884643.1884666,Bautista-Gomez:2011:FHP:2063384.2063427,Sato:2012:DMN:2388996.2389022}. \emph{Adaptive} checkpointing based on failure prediction is discussed in~\cite{li2007using}. A study on checkpointing targeted specifically at GPU-based computations can be found in~\cite{6012895}. \cite{730527} presents diskless checkpointing. Other interesting schemes are based on: Reed-Solomon coding~\cite{Bautista-Gomez:2011:FHP:2063384.2063427}, cutoff and compression to reduce checkpoint sizes~\cite{6264674}, checkpointing on clouds~\cite{Nicolae:2011:BEC:2063384.2063429}, reducing I/O bottlenecks~\cite{scc}, and performant checkpoints to PFS~\cite{Arteaga:2011:TSA:2060102.2060540}.
%



\subsection{Logging Protocols}
\label{sec:loggingProtocols}

\goal{+ Say how we divide logging schemes}

Logging enables restored processes to replay their execution beyond the most recent checkpoint. Log-based protocols are traditionally categorized into: \emph{pessimistic}, \emph{optimistic}, \emph{causal}~\cite{Elnozahy:2002:SRP:568522.568525}; they can also be \emph{sender-based}~\cite{Riesen:2012:ASI:2388996.2389021,6012907} and \emph{receiver-based}~\cite{Elnozahy:2002:SRP:568522.568525} depending on which side logs messages.


\textbf{Pessimistic Schemes }
Such protocols log events before they influence the system. This ensures no orphan processes and simpler recovery, but may incur severe overheads during fault-free runs. An example protocol is V-MPICH~\cite{1592865}.


\textbf{Optimistic Schemes }
Here, processes postpone logging messages to achieve, e.g., better computation-communication overlap. However, the algorithms for recovery are usually more complicated and crashed processes may become orphans~\cite{Elnozahy:2002:SRP:568522.568525}. A recent scheme can be found in~\cite{Riesen:2012:ASI:2388996.2389021}.


\textbf{Causal Schemes }
In such schemes processes log and exchange (by piggybacking to messages) dependencies needed for recovery. This ensures no orphans but may reduce bandwidth~\cite{Elnozahy:2002:SRP:568522.568525}. An example protocol is discussed in~\cite{142678}.

\subsection{Other Important Studies \& Discussion}

\goal{+ Describe other related research}

Deriving an optimum checkpointing interval is presented in~\cite{Daly:2006:HOE:1134241.1134248}. Formalizations targeting resilience can be found in~\cite{342127,Elnozahy:2002:SRP:568522.568525}. Power consumption was addressed in~\cite{Sardashti:2012:ULN:2304576.2304587,Chung:2012:CDS:2388996.2389075}. \emph{Containment domains} for encapsulating failures within a hierarchical scope are discussed in~\cite{Chung:2012:CDS:2388996.2389075}. Modeling and prediction of failures is addressed in~\cite{Bautista-Gomez:2011:FHP:2063384.2063427, Chung:2012:CDS:2388996.2389075}. Work on send determinism in MP can be found in~\cite{6012907}.

\goal{+ Say why we differ \& are better than the above}

Our study goes beyond the existing research scope presented in this
section. First, we develop a fault tolerance model that covers virtually whole rich RMA
semantics. Other existing formalizations (e.g.,~\cite{342127,Elnozahy:2002:SRP:568522.568525,
Alvisi:1998:MLP:630821.631222}) target MP only.
We then use the model to formally analyze why resilience for RMA
differs from MP and to design checkpointing, logging, and recovery
protocols for RMA. We identify and propose solutions to several
challenges in resilience for RMA that \emph{do not} exist in MP, e.g.:
consistency problems
caused by the relaxed RMA memory model (\cref{sec:taking_coordinated_ckp}, \cref{sec:taking_uncoordinated_ckp}, \cref{sec:transparentLogging}), access non-determinism (\cref{sec:managing_unc}), issues due to
one-sided RMA communication (\cref{sec:rma_vs_mp_ucc}), logging multiple RMA-specific orders (\cref{sec:logging_order_info}), etc.
Our model enables proving correctness of proposed schemes.
Extending our model for arbitrary hardware hierarchies generalizes the approach from~\cite{Bautista-Gomez:2011:FHP:2063384.2063427}
and enables formal reasoning about crashes of hardware elements and process distribution.
Finally, our protocol leverages and combines several important concepts and mechanisms (Daly's interval~\cite{Daly:2006:HOE:1134241.1134248}, multi-level design~\cite{Moody:2010:DME:1884643.1884666}, etc.) to improve the resilience of RMA systems even further and is the first implementation of holistic fault tolerance for RMA.

\section{Conclusion}

\goal{State RMA is becoming popular but there's no fault-tolerance for it}

RMA programming models are growing in popularity and importance as they allow for the best utilization of hardware features such as OS-bypass or zero-copy data transfer. Still, little work addresses fault tolerance for RMA.

\goal{Advertise our formal model and describe its broader applications}

We established, described, and explored a
complete formal model of fault tolerance for RMA and illustrated how to use it to design and reason about resilience protocols running on flat and hierarchical machines. It will play an important role in making emerging RMA programming fault tolerant and can be easily extended to cover, e.g., stable storage.

\goal{Describe our protocol/implementation and suggest its broader applications}

Our study does not resort to traditional less scalable mechanisms that often rely on costly I/O flushes.
The implementation of our holistic protocol adds negligible overheads to the applications runtime, for example 1-5\% for in-memory checkpointing and 8\% for fully transparent logging of remote memory accesses in the NAS 3D FFT code. Our probability study shows that the protocol offers high resilience. The idea of demand checkpoints will help alleviate the problem of limited memory amounts in today's petascale and future exascale computing centers. 

\goal{Describe broader potential behind the whole paper}

Finally, our work provides the basis for
further reasoning about fault-tolerance not only for RMA,
but also for all the other models that can be constructed upon it, such
as task-based programming models. This will play an important role in complex heterogeneous large-scale systems.

\maciej{TODO: show (prove?) than MP CANNOT log puts/gets}

\maciej{add ``stable''?}

\maciej{TODO: epoch equations}

\maciej{do BUPC}

{
\vspace{0em}\section*{Acknowledgements}
We thank the CSCS team granting access to the Monte Rosa machine, and for their
excellent technical support. We thank Franck Cappello for inspiring remarks.
We thank Timo Schneider
for his immense help with computing infrastructure at SPCL.}

\bibliographystyle{abbrv}
\bibliography{references}

\end{document}